\newtheorem{theorem}{Theorem}[section]
\newtheorem{lemma}[theorem]{Lemma}
\newtheorem{proposition}[theorem]{Proposition}
\newtheorem{corollary}[theorem]{Corollary}
\theoremstyle{definition}\newtheorem{definition}{Definition}
\theoremstyle{remark}\newtheorem{example}{Example}
\theoremstyle{remark}\newtheorem{remark}[theorem]{Remark}
\newcounter{ticklistc}
\newcommand{\ep}{\varepsilon}
\newcommand{\Z}{\mathbb Z}
\newcommand{\R}{\mathbb R}
\newcommand{\C}{\mathbb C}
\newcommand{\EE}{\mathbb E}
\renewcommand{\L}{\mathcal L}
\newcommand{\Hom}{\mathrm{Hom}}
\begin{document}

\title{The critical temperature for the Ising model on planar doubly periodic graphs}

\author{David Cimasoni}
\address{Section de math\'ematiques, 2-4 rue du Li\`evre, 1211 Gen\`eve 4, Switzerland}
\email{David.Cimasoni@unige.ch; Hugo.Duminil@unige.ch}
\author{Hugo Duminil-Copin}

\subjclass[2000]{82B20}  
\keywords{Ising model, critical temperature, weighted periodic graph, Kac-Ward matrices, Harnack curves}

\begin{abstract}
We provide a simple characterization of the critical temperature for the Ising model on an arbitrary planar doubly periodic weighted  graph.
More precisely, the critical inverse temperature $\beta$ for a graph $G$ with coupling constants $(J_e)_{e\in E(G)}$ is obtained as the unique solution of a linear equation in the
variables $(\tanh(\beta J_e))_{e\in E(G)}$. This is achieved by studying the high-temperature expansion of the model using Kac-Ward matrices.
\end{abstract}

\maketitle


\pagestyle{myheadings}
\markboth{D. Cimasoni and H. Duminil-Copin}{The critical temperature for the Ising model}


\section{Introduction}

\subsection{Motivation}

The Ising model is probably one of the most famous models in statistical physics. It was introduced by Lenz in~\cite{Len20} as an attempt to understand Curie's temperature for ferromagnets. It can be defined as follows.
Let $G$ be a finite graph with vertex set $V(G)$ and edge set $E(G)$. A {\em spin configuration} on $G$ is an element $\sigma$ of $\{-1,+1\}^{V(G)}$.
Given a positive edge weight system $J=(J_e)_{e\in E(G)}$ on $G$, the {\em energy} of such a spin configuration $\sigma$ is defined by
\[
\mathcal{H}(\sigma)=-\sum_{e=\{u,v\}\in E(G)}J_e\sigma_u\sigma_v.
\]
Fixing an {\em inverse temperature} $\beta\ge 0$ determines a probability measure on the set $\Omega(G)$ of spin configurations by
\[
\mu_{G,\beta}(\sigma)=\frac{e^{-\beta\mathcal{H}(\sigma)}}{Z_\beta^J(G)},
\]
where the normalization constant
\[
Z_\beta^J(G)=\sum_{\sigma\in\Omega(G)}e^{-\beta\mathcal{H}(\sigma)}
\]
is called the {\em partition function\/} of the {\em Ising model on $G$ with coupling constants $J$\/}.

In this article, we focus on planar locally-finite doubly periodic weighted graphs $(\mathcal G,J)$, i.e. weighted graphs which are invariant under the action of some lattice $\Lambda\simeq\Z\oplus\Z$. In such case, $\mathcal G/\Lambda=:G$ is a finite graph embedded in the torus $\mathbb T^2=\R^2/\Lambda$.
By convention $\mathcal G$ will always denote a doubly periodic graph embedded in the plane, while $G$ will denote a graph embedded in the torus.  Ising probability measures can be constructed on $\mathcal G$ as limits of finite volume probability measures~\cite{McCW}. In particular, the Ising measure at inverse temperature $\beta$ on $\mathcal G$ with $+$ boundary conditions will be denoted by $\mu_{\mathcal G,\beta}^+$. 

We further assume that $\mathcal G$ (or equivalently $G$) is {\em non-degenerate}, i.e. that  the complement of the edges is the union of topological discs. A Peierls argument~\cite{Pei36} and the GKS inequality~\cite{Gri67,KS68} classically imply that the Ising model on $\mathcal G$ exhibits a phase transition at some critical inverse temperature $\beta_c\in(0,\infty)$:
\begin{itemize}
\item for $\beta<\beta_c$, $\mu_{\mathcal G,\beta}^+(\sigma_v)=0$ for any $v\in V(\mathcal G)$,
\item for $\beta>\beta_c$, $\mu_{\mathcal G,\beta}^+(\sigma_v)>0$ for any $v\in V(\mathcal G)$.
\end{itemize}
This article provides a computation of the critical inverse temperature for arbitrary non-degenerate doubly periodic weighted graphs $(\mathcal G,J)$
as a solution of an algebraic equation in the variables $x_e=\tanh(\beta J_e)$.

\subsection{High-temperature expansion of the Ising model}
The result will be best stated in terms of the high-temperature expansion of the Ising model, which we present briefly now. As observed by van der Waerden~\cite{vdW}, the identity
\[
\exp(\beta J_e\sigma_u\sigma_v)=\cosh(\beta J_e)(1+\tanh(\beta J_e)\sigma_u\sigma_v)
\]
allows to express the partition function as
\begin{align}
Z_\beta^J(G)&=\Big(\prod_{e\in E(G)}\cosh(\beta J_e)\Big)\sum_{\sigma\in\Omega(G)}\prod_{e=[uv]\in E(G)}(1+\tanh(\beta J_e)\sigma_u\sigma_v)\nonumber\\
&=\Big(\prod_{e\in E(G)}\cosh(\beta J_e)\Big)2^{|V(G)|}\sum_{\gamma\in\mathcal{E}(G)}\prod_{e\in\gamma}\tanh(\beta J_e),\label{high-temp}
\end{align}
where $\mathcal{E}(G)$ denotes the set of even subgraphs of $G$, that is, the set of subgraphs $\gamma$ of $G$ such that every vertex of $G$ is adjacent to an even number of edges of $\gamma$. As a consequence, the Ising partition function
is proportional to
$
Z(G,x):=\sum_{\gamma\in\mathcal{E}(G)}x(\gamma),
$
where $x(\gamma)=\prod_{e\in\gamma}x_e$ and $x_e=\tanh(\beta J_e)$.
This is called the {\em high-temperature expansion\/} of the partition function.

\subsection{Statement of the result} Let $(\mathcal G,J)$ be a planar non-degenerate locally-finite doubly periodic weighted graph. Recall that $G=\mathcal G/\Lambda$ is naturally embedded in the torus. Let $\mathcal{E}_0(G)$ denote the set of even subgraphs of $G$ that wind around each of the two directions of the torus an even number of times. Set $\mathcal{E}_1(G)=\mathcal{E}(G)\setminus\mathcal{E}_0(G)$. 

\begin{theorem}
\label{thm:temp}
The critical inverse temperature $\beta_c$ for the Ising model on the weighted graph $(\mathcal G,J)$ is the unique solution $0<\beta<\infty$ to the equation
\begin{equation}\label{eq:temp}
\sum_{\gamma\in\mathcal{E}_0(G)}x(\gamma)=\sum_{\gamma\in\mathcal{E}_1(G)}x(\gamma),
\end{equation}
where $x(\gamma)=\prod_{e\in\gamma}x_e$ and $x_e=\tanh(\beta J_e)$. 
\end{theorem}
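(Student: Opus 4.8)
The plan is to pass through the Kac--Ward formalism and the geometry of the resulting spectral curve. First I would form the Kac--Ward matrix of the fundamental graph $G$, decorated with Bloch--Floquet variables $(z,w)\in(\C^*)^2$ that record how oriented edge-paths cross the two cuts of the torus, and set $P(z,w)=\det(I-T(z,w))$. This is a Laurent polynomial whose coefficients are real polynomials in the $x_e=\tanh(\beta J_e)$, so $\{P=0\}\subset(\C^*)^2$ is a real algebraic curve with a well-defined amoeba. Classical periodic Kac--Ward theory gives the free energy density as
\[
f(\beta)=\tfrac12\iint_{\mathbb T^2}\log|P(z,w)|\,\frac{dz}{2\pi i z}\,\frac{dw}{2\pi i w}+(\text{term analytic in }\beta),
\]
where $\mathbb T^2=\{|z|=|w|=1\}$. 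The crucial algebraic input is the torus Kasteleyn identity: at each of the four real points $(\pm1,\pm1)$ the value $P(\pm1,\pm1)$ is a fixed positive multiple of the square of a homology-twisted partition function. Writing $Z_{ab}=\sum_{[\gamma]=(a,b)}x(\gamma)$ for the sum over even subgraphs of homology class $(a,b)\in(\Z/2)^2$, the periodic--periodic corner $(1,1)$ --- the unique odd spin structure --- yields $P(1,1)=c\,(Z_{00}-Z_{01}-Z_{10}-Z_{11})^2$ with $c>0$. Since $\mathcal E_0(G)$ is the class $(0,0)$ and $\mathcal E_1(G)$ the other three classes, the vanishing of this corner value is exactly equation~\eqref{eq:temp}.

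Next I would locate the transition. The integral above is non-analytic precisely when the origin lies on the boundary of the amoeba of $P$, that is, when the spectral curve first meets $\mathbb T^2$. Here I would use that the Ising spectral curve is a simple Harnack curve, for which the boundary of the amoeba is exactly the image of the real locus; consequently the first contact with $\mathbb T^2$ occurs at a real point of the curve, forcing $|z|=|w|=1$ with $z,w$ real, i.e. at one of the four corners. Because the $Z_{ab}$ are sums of nonnegative monomials, the three even-spin-structure combinations exceed the odd one by the nonnegative quantities $2(Z_{01}+Z_{10})$, $2(Z_{01}+Z_{11})$ and $2(Z_{10}+Z_{11})$, so $(1,1)$ carries the smallest corner value and is the first to vanish; the first contact therefore happens exactly when $Z_{00}-Z_{01}-Z_{10}-Z_{11}=0$, which is~\eqref{eq:temp}. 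Monotonicity of the $x_e$ and of the correlations in $\beta$ (GKS) makes the amoeba grow monotonically, so the origin lies in it precisely for $\beta\ge\beta_{\mathrm{eq}}$; this pins down $\beta_{\mathrm{eq}}$ as the unique solution of~\eqref{eq:temp}.

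It remains to identify this spectral threshold $\beta_{\mathrm{eq}}$ with the magnetization critical point $\beta_c$, which I expect to be the main obstacle. For $\beta<\beta_{\mathrm{eq}}$ the curve is disjoint from $\mathbb T^2$; inserting this into the fermionic Kac--Ward (order--disorder) representation of $\mu^+_{\mathcal G,\beta}(\sigma_u\sigma_v)$ gives exponential decay of the two-point function, hence $\mu^+_{\mathcal G,\beta}(\sigma_v)=0$ and $\beta_c\ge\beta_{\mathrm{eq}}$. The reverse inequality requires producing spontaneous magnetization for every $\beta>\beta_{\mathrm{eq}}$, and this is the delicate step: I plan to obtain it through planar duality, relating the high-temperature expansion on $(\mathcal G,J)$ to the low-temperature expansion on the doubly periodic dual $(\mathcal G^*,J^*)$, under which equation~\eqref{eq:temp} for $\mathcal G$ is exchanged with the corresponding equation for $\mathcal G^*$; thus $\beta>\beta_{\mathrm{eq}}$ lands in the disordered regime of the dual and so in the ordered regime of $\mathcal G$. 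Bridging this analytic computation to the probabilistic definition of $\beta_c$, together with verifying that the Ising spectral curve is indeed Harnack (via Fisher's dimer correspondence and the Kenyon--Okounkov--Sheffield theorem), is where the real work lies.
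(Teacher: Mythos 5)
Your overall architecture (Kac--Ward determinant, free energy as $\tfrac12\iint\log P$, Harnack property of the spectral curve forcing zeros on the unit torus to be at $(\pm1,\pm1)$, identification of the $(1,1)$ corner with equation~\eqref{eq:temp}) is the paper's, but several key steps have genuine gaps. First, your argument for the corners $(-1,1),(1,-1),(-1,-1)$ only shows that those linear combinations exceed $Z_{00}-Z_{10}-Z_{01}-Z_{11}$ by nonnegative amounts; it does not show they are \emph{positive} (a priori $Z_{00}+Z_{10}-Z_{01}+Z_{11}$ could vanish if $Z_{01}$ were large). The paper proves the inequality $Z_\alpha\le Z_{00}$ for every homology class $\alpha$ by re-expanding $Z_\alpha(G,x)$ in the dual partial partition functions via the square roots $P^{\pm1,\pm1}(G^*,x^*)^{1/2}$; this strict positivity at the three non-trivial corners, valid for \emph{all} weights, is needed both for the free-energy formula (the sandwich $P^{-1,1}(G_{nn},x)\le Z(G_{nn},x)^2\le\tfrac{9}{4}\max\{\dots\}$) and to localize every zero of $P^{z,w}$ at $(1,1)$. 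Second, your uniqueness claim rests on ``the amoeba grows monotonically in $\beta$ by GKS,'' which is unsubstantiated: GKS controls correlations, not the sign of $Z_{00}-Z_{10}-Z_{01}-Z_{11}$. The paper gets uniqueness by showing that \emph{any} $\beta$ with $P^{1,1}(G,x_\beta)=0$ equals $\beta_c$ from both sides. Also, your parenthetical route to the Harnack property via the Fisher correspondence cannot work as stated: the Fisher graph is not bipartite, and Kenyon--Okounkov requires a bipartite dimer model; the paper deliberately avoids Fisher and builds the bipartite Fan--Wu/Dub\'edat graph $C_G$ with $P^\varphi(G,x)\propto\det K^\varphi(C_G,y)$.

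Third, the bridge to the probabilistic $\beta_c$ --- which you rightly flag as the hard part --- is handled in the paper by a different and more elementary mechanism than the one you sketch. You propose to deduce exponential decay of $\mu^+_{\mathcal G,\beta}(\sigma_u\sigma_v)$ directly from the spectral curve avoiding $\mathbb T^2$ via a fermionic order--disorder representation; for general doubly periodic weighted graphs this is a substantial unproven claim. You then invoke ``disordered dual $\Rightarrow$ ordered primal,'' which is precisely the classical lacuna of Kramers--Wannier arguments (a priori both models could fail to magnetize). The paper instead argues by contradiction in the opposite direction: if $P^{1,1}(G,x_\beta)=0$ with $\beta<\beta_c$, a local expansion of $P^{e^{i\theta},e^{i\eta}}(G,x(t))$ near $(1,1,0)$ shows the free energy fails to be twice differentiable in some edge variable, whereas Aizenman--Barsky--Fern\'andez exponential decay below $\beta_c$ combined with Lebowitz's inequality forces twice differentiability --- a contradiction giving $\beta\ge\beta_c$; the reverse inequality runs the same argument on the dual, with the non-coexistence of infinite primal and dual clusters (Edwards--Sokal coupling plus Sheffield's theorem) supplying exactly the step your duality sketch omits.
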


Note that the sign of $\sum_{\gamma\in\mathcal{E}_0(G)}x(\gamma)-\sum_{\gamma\in\mathcal{E}_1(G)}x(\gamma)$ indicates if the system is in the disordered phase (when the quantity is positive)
or ordered phase (when it is negative).


Let us briefly summarize the strategy of the proof.
Our main tool is the so-called {\em Kac-Ward matrix\/} associated to the weighted graph $(G,J)$ and to a pair of non-vanishing complex numbers $(z,w)$, see Definition~\ref{def:KW} below.
We show that the free energy per fundamental domain can be expressed in terms of the Kac-Ward determinants (Lemma~\ref{lemma:free}). Using exponential decay
of the spin correlations in the disordered phase~\cite{ABF87} together with a duality argument (Theorem~\ref{normal}), the free energy is then shown to be twice differentiable in each variable
$J_e$ at any $\beta\neq \beta_c$. The proof is completed by observing that Equation~\eqref{eq:temp} is equivalent to the vanishing of the Kac-Ward determinant at $(z,w)=(1,1)$, which translates
into the free energy not being twice differentiable in some variable $J_e$.

The main technical difficulty is to make sure that these Kac-Ward determinants never vanish for $z$ and $w$ of modulus 1, except at $(z,w)=(1,1)$. This is achieved by showing that these determinants
are proportional to the Kasteleyn determinants of an associated bipartite graph (Theorem~\ref{thm:corr}), a graph first considered by Fan-Wu~\cite{F-W} (see also~\cite{Dub}). One can then
harness~\cite{K-O} to show that the associated spectral curve is a special Harnack curve, and the desired statement follows (Lemma~\ref{lemma:pos}).

This connection between the Ising model and dimers on a bipartite graph is interesting on its own. It was used recently by Dub\'edat to show the exact bozonisation of the
Ising spin-correlations~\cite{Dub}. We give some other applications in the body of the text, and refer to~\cite{Cim12} for further ones.
Let us also mention that Kac-Ward matrices are related to $s$-holomorphicity and fermionic observables introduced in~\cite{CS09} (see~\cite{DS11} for an overview on the subject). Therefore, this identification of the critical inverse temperature using the Kac-Ward matrices opens new grounds in the understanding of universality and conformal invariance for the Ising model on arbitrary
doubly periodic graphs. 

We would like to highlight the fact that in the case of doubly periodic coupling constants on $\mathcal G=\mathbb Z^2$, Theorem~\ref{thm:temp} was proved independently by Li~\cite{Li2}
(although the statement is formulated quite differently). The proof in~\cite{Li2} relies on the paper~\cite{Li1} of the same author. The main ingredient is a sequence of transformations
mapping the Ising model on $\mathbb Z^2$ to the dimer model on the associated Fisher graph, and ultimately to a dimer model on a bipartite graph using~\cite{Dub}.
In our strategy, the Kac-Ward matrices allow us to work on the original graph almost throughout, avoiding in particular the use of the Fisher correspondence.

\subsection{A corollary} A consequence of the proof is the following corollary, which identifies the critical point as being the unique singularity of the free energy per fundamental domain, see \eqref{eq:def free energy} for the definition.

\begin{corollary}\label{cor}
The free energy per fundamental domain is analytic in $\beta$ for any $\beta \ne \beta_c$.
\end{corollary}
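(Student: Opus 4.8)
The plan is to read the analyticity directly off the integral representation of the free energy per fundamental domain supplied by Lemma~\ref{lemma:free}. Recall from~\eqref{eq:def free energy} and that lemma that, up to the manifestly analytic prefactor coming from the $\prod_{e}\cosh(\beta J_e)$ and $2^{|V(G)|}$ factors in~\eqref{high-temp}, the free energy takes the form
\[
f(\beta)=\frac{1}{(2\pi)^2}\int_0^{2\pi}\!\!\int_0^{2\pi}\log\bigl|D_\beta(e^{i\theta},e^{i\varphi})\bigr|\,d\theta\,d\varphi,
\]
where $D_\beta(z,w)$ denotes the Kac-Ward determinant of $(G,J)$ at $(z,w)$. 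Since $D_\beta(z,w)$ is a Laurent polynomial in $(z,w)$ whose coefficients are polynomials in the variables $x_e=\tanh(\beta J_e)$, and since each $x_e$ extends to a holomorphic function of $\beta$ on a complex neighbourhood of the real axis, the map $(\beta,z,w)\mapsto D_\beta(z,w)$ is jointly analytic.

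First I would fix $\beta_0\neq\beta_c$ and invoke Lemma~\ref{lemma:pos}. Because $\beta_0\neq\beta_c$, equation~\eqref{eq:temp} fails, so the Kac-Ward determinant does not vanish at $(z,w)=(1,1)$; combined with Lemma~\ref{lemma:pos} this yields $D_{\beta_0}(z,w)\neq0$ for every $(z,w)$ on the unit torus $\{|z|=|w|=1\}$. By compactness of the torus and continuity of $D$ there is a $\delta>0$ with $|D_{\beta_0}|\geq\delta$ throughout, and by joint continuity this lower bound persists, for $(z,w)$ on the torus, for all complex $\beta$ in some neighbourhood $U$ of $\beta_0$.

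The conclusion then follows formally. On $U$ the logarithmic derivative $\partial_\beta D_\beta/D_\beta$ is well defined, jointly continuous, and uniformly bounded on $U\times\{|z|=|w|=1\}$, and holomorphic in $\beta$ for each fixed $(z,w)$. Hence $h(\beta)=\tfrac{1}{(2\pi)^2}\iint \partial_\beta D_\beta/D_\beta\,d\theta\,d\varphi$ is holomorphic on $U$, by Morera's theorem together with Fubini. For real $\beta$ one has $f'(\beta)=\mathrm{Re}\,h(\beta)$ plus the derivative of the analytic prefactor, so $f'$—and therefore $f$—is real-analytic at $\beta_0$. Since $\beta_0\neq\beta_c$ was arbitrary, the corollary follows.

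The hard part is already behind us: the only point requiring care is the passage from the pointwise non-vanishing of Lemma~\ref{lemma:pos} to a \emph{uniform} lower bound for $|D_\beta|$ over the whole torus and over a full complex neighbourhood of $\beta_0$, which is a routine compactness-and-continuity argument. Once this is in place, the interchange of differentiation and integration and the appeal to Morera's theorem are entirely standard, so the corollary is essentially a formal consequence of Lemma~\ref{lemma:pos}: all the genuine content, namely that the Kac-Ward determinant avoids the origin on the unit torus precisely away from $\beta_c$, has already been established there.
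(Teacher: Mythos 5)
Your proof is correct and follows essentially the same route as the paper's: both express the free energy via the integral formula of Lemma~\ref{lemma:free} and deduce analyticity in $\beta$ from the strict positivity of the Kac--Ward determinant on the unit torus away from $\beta_c$ (Lemma~\ref{lemma:pos} combined with the uniqueness statement of Theorem~\ref{thm:temp}). The paper simply asserts that the resulting parameter-dependent integral is analytic, whereas you supply the standard compactness, differentiation-under-the-integral and Morera details.
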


\subsection{Some examples}

We now illustrate Theorem~\ref{thm:temp} with several examples.

\begin{figure}[Htb]
\labellist\small\hair 2.5pt
\pinlabel {$J_1$} at 1590 150
\pinlabel {$J_2$} at 1780 150
\pinlabel {$J_3$} at 1530 45
\pinlabel {$J_4$} at 1760 45
\endlabellist
\centerline{\psfig{file=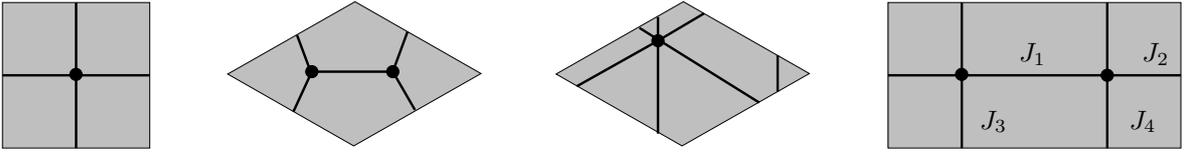,width=\textwidth}}
\caption{The graphs of Examples~\ref{ex:1},~\ref{ex:2},~\ref{ex:3} and~\ref{ex:5}.}
\label{fig:ex}
\end{figure}

\begin{example}\label{ex:1}
Consider the square lattice with coupling constants $J_e=1$. It can be represented by the toric graph illustrated to the left of Figure~\ref{fig:ex}.
The equation of Theorem~\ref{thm:temp} then reads
$1=2x+x^2$, where $x=\tanh(\beta)$. This leads to the well-known value 
\[
\beta_c=\tanh^{-1}(\sqrt{2}-1)=\frac{1}{2}\log(\sqrt{2}+1).
\]
This value was first predicted in~\cite{K-W} and proved in~\cite{Ons}. Several alternative derivations have been presented, see e.g.~\cite{ABF87,BD11,BD12}.
\end{example}

\begin{example}\label{ex:2}
Consider now the hexagonal lattice with coupling constants $J_e=1$ (see Figure~\ref{fig:ex}). This time, the equation reads $1=3x^2$, leading to
\[
\beta_c=\tanh^{-1}(\sqrt{3}/3)=\frac{1}{2}\log(2+\sqrt{3}).
\]
\end{example}

\begin{example}\label{ex:3}
Similarly, the equation associated to the homogeneous triangular lattice as represented in Figure~\ref{fig:ex} is
$
1+x^3=3x+3x^2.
$
This leads to 
\[
\beta_c=\tanh^{-1}(2-\sqrt{3})=\frac{1}{2}\log(\sqrt{3}).
\]
\end{example}

\begin{example}
More generally, consider a doubly periodic graph $\mathcal G$ {\em isoradially embedded} in the plane, i.e. embedded in such a way that each face is inscribed in a circle of radius one,
with the circumcenter in the closure of the face. To each edge $e$, associate the coupling constant
\[
J_e=\frac{1}{2}\log\left(\frac{1+\sin\theta_e}{\cos\theta_e}\right),
\]
where $\theta_e\in(0,\pi/2)$ is the half-rhombus angle associated to the edge $e$, as illustrated in Figure~\ref{fig:theta}.

It follows from~\cite[Theorem~4.7]{Cim3} (see also~\cite{BdT1}) that $\sum_{\gamma\in\mathcal{E}_0(G)}x(\gamma)-\sum_{\gamma\in\mathcal{E}_1(G)}x(\gamma)$ vanishes for $\beta=1$.
By Theorem~\ref{thm:temp}, this is the critical temperature for the isoradial graph $\mathcal G$. The cases with constant angles $\theta_e=\pi/4$, $\pi/3$ and $\pi/6$
correspond to the three examples above.
Let us mention that the class of isoradial graphs has been extensively studied in order to understand universality; see~\cite{Bax78,BdT1,BdT2,CS09}.
\end{example}

\begin{figure}[Htb]
\labellist\small\hair 2.5pt
\pinlabel {$e$} at 180 135
\pinlabel {$\theta_e$} at 115 140
\endlabellist
\centerline{\psfig{file=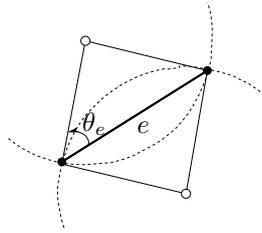,height=3cm}}
\caption{An edge $e$ of an isoradial graph, the associated rhombus, and the half-rhombus angle $\theta_e$.}
\label{fig:theta}
\end{figure}

\begin{example}\label{ex:5} Let us conclude with one last example which does not belong to the class of isoradial weighted graphs.
Let $G$ denote the $2\times 1$ square lattice with arbitrary coupling constants $J_1,J_2,J_3,J_4$ as illustrated in the right-hand side of Figure~\ref{fig:ex}. Then, the critical inverse temperature is
given by the equation
\[
1+x_3x_4=x_3+x_4+x_1x_2+x_1x_2x_3+x_2x_3x_4+x_1x_2x_3x_4
\]
where $x_i=\tanh(\beta J_i)$ for $i=1,2,3,4$.
\end{example}

\subsection*{Organization of the article} The next section defines the Kac-Ward matrices and recalls several of their properties. Section 3 presents a connection between Kac-Ward and Kasteleyn matrices,
as well as consequences for the Ising model. Section 4 contains the proofs of Theorem~\ref{thm:temp} and Corollary~\ref{cor}.

\subsection*{Acknowledgments}The second author was supported by the ANR grant BLAN06-3-134462, the ERC AG CONFRA, as well as by the Swiss
{FNS}. The authors would like to thank Yvan Velenik for very interesting comments and references. The authors also thank Zhongyang Li, B\'eatrice de Tilli\`ere and
C\'edric Boutillier for helpful discussions.


\section{The Kac-Ward matrices}
\label{sec:KW}

The aim of this section is to review the definition and main properties of the Kac-Ward matrices associated with graphs embedded in surfaces. To simplify the present exposition, we shall only
treat the case of toric graphs with straight edges, referring the reader to~\cite{Cim2} for the general case and the proofs.

Let us start with some general terminology and notation. Given a weighted graph $(G,x)$, let ${\EE}={\EE}(G)$ be the set of oriented edges of $G$.
Following~\cite{Ser}, we shall denote by $o(e)$ the origin of an oriented edge $e\in\EE$, by $t(e)$ its terminus, and by $\bar{e}$ the same edge with the opposite orientation.
By abuse of notation, we shall write $x_e=x_{\bar{e}}$ for the weight associated to the unoriented edge corresponding to $e$ and $\bar{e}$.

Now, assume that $G$ is embedded in the oriented torus $\mathbb{T}^2$ so that its edges are straight lines and its faces are topological discs.
Fix a character $\varphi$ of the fundamental group of $\mathbb{T}^2$, that is, an element of
\[
\Hom(\pi_1(\mathbb{T}^2),\C^*)
=H^1(\mathbb{T}^2;\C^*),
\]
the first cohomology group of $\mathbb{T}^2$ with coefficients in $\C^*$.\footnote{Recall that in the present context,
a {\em 1-cochain\/} is a map $\varphi\colon\EE\to\C^*$ such that $\varphi(\bar{e})=\varphi(e)^{-1}$ for all $e\in\EE$.
Such a map is called a {\em 1-cocycle\/} if for each face $f$ of $G\subset\mathbb{T}^2$, $\varphi(\partial f):=\prod_{e\in\partial f}\varphi(e)=1$.
Multiplying each $\varphi(e)$ such that $o(e)=v$ by a fixed $\lambda_v\in\C^*$ results in another 1-cocycle, which is said to be {\em cohomologous\/} to $\varphi$.
Equivalence classes of 1-cocycles define the first cohomology group $H^1(\mathbb{T}^2;\C^*)$.}

Note that the choice of two oriented simple closed curves $\gamma_x,\gamma_y$ representing a basis of $H_1(\mathbb{T}^2;\Z)$ determines an isomorphism
$H^1(\mathbb{T}^2;\C^*)\simeq (\C^*)^2$. Then, such a character simply corresponds to a pair of non-zero complex numbers $(z,w)$. Let us now assume that the curves
$\gamma_x,\gamma_y$ avoid the vertices of $G$ and intersect its edges transversally. Then, a natural 1-cocycle representing the class $(z,w)$ is given by the map
$\varphi\colon\EE\to\C^*$ defined by $\varphi(e)=z^{\gamma_x\cdot e}\,w^{\gamma_y\cdot e}$, where $\cdot$ denotes the intersection form. In words, $\gamma\cdot e$
gathers a $+1$ (resp. $-1$) each time $\gamma$ and $e$ intersect in such a way that $(\gamma,e)$ define the positive (resp. negative) orientation on $\mathbb{T}^2$.

\begin{definition}
\label{def:KW}
Let $T^\varphi$ denote the $|{\EE}|\times|{\EE}|$ matrix defined by
\[
T^\varphi_{e,e'}=
\begin{cases}
\varphi(e)\,\exp\left(\frac{i}{2}\alpha(e,e')\right)\,x_e& \text{if $t(e)=o(e')$ but $e'\neq \bar{e}$;} \\
0 & \text{otherwise,}
\end{cases}
\]
where $\alpha(e,e')\in (-\pi,\pi)$ denotes the angle from $e$ to $e'$, as illustrated in Figure~\ref{fig:alpha}.
We shall call the matrix $I-T^\varphi$ the {\em Kac-Ward matrix\/} associated to the weighted graph $(G,x)$, and denote its determinant by
$P^\varphi(G,x)$. When $\varphi$ is identified with a pair of complex numbers $(z,w)$, we shall simply denote the determinant by $P^{z,w}(G,x)$.
\end{definition}

\begin{figure}[Htb]
\labellist\small\hair 2.5pt
\pinlabel {$e$} at 96 180
\pinlabel {$e'$} at 246 190
\pinlabel {$\alpha(e,e')$} at 310 100
\endlabellist
\centerline{\psfig{file=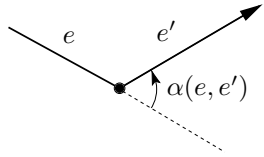,height=2cm}}
\caption{The angle $\alpha(e,e')\in (-\pi,\pi)$.}
\label{fig:alpha}
\end{figure}

Expanding this determinant, see~\cite{Cim2}, leads to an expression of the form
\begin{equation}\label{eq:def p}
P^\varphi(G,x)=\sum_{\gamma\in\Gamma(G)}(-1)^{\epsilon(\gamma)}x(\gamma)\varphi(\gamma),
\end{equation}
where $\epsilon(\gamma)\in\{0,1\}$ is some sign and elements of $\Gamma(G)$ are unions of closed paths in $G$ that never backtrack and pass through
each edge of $G$ at most twice, and if so, in opposite directions.
This shows that $P^\varphi(G,x)$ does not depend on the choice of the 1-cocycle representing the cohomology class $\varphi\in H^1(\mathbb{T}^2;\C^*)$.
Since $\gamma$ belongs to $\Gamma(G)$ if and only if $\bar{\gamma}$ does, we also immediately obtain that if $\varphi$ belongs to
$H^1(\mathbb{T}^2;S^1)\simeq S^1\times S^1$ and $x$ to $\R^{E(G)}$, then $P^\varphi(G,x)$ is a real number.

As detailed in~\cite{Cim2}, a closer analysis of the sign $\epsilon(\gamma)$ leads to the following results.
Fix two oriented simple closed curves $\gamma_x,\gamma_y$ as above, thus identifying $H^1(\mathbb{T}^2;\C^*)$ with $(\C^*)^2$ and $H_1(\mathbb{T}^2;\Z_2)$ with $(\Z_2)^2$.
For $\alpha\in H_1(\mathbb{T}^2;\Z_2)$, let $Z_\alpha$ denote the corresponding partial partition function, that is
\[
Z_\alpha=\sum_{\gamma\in\mathcal{E}(G),[\gamma]=\alpha}x(\gamma),
\]
where the sum ranges over all paths with homology class equal to $\alpha$.
\begin{proposition}[\cite{Cim2}]
\label{prop:Z}
We have the following equalities:
\begin{align*}
P^{1,1}(G,x)&=(Z_{00}-Z_{10}-Z_{01}-Z_{11})^2,\quad\quad\quad P^{-1,1}(G,x)=(Z_{00}+Z_{10}-Z_{01}+Z_{11})^2,\\
P^{1,-1}(G,x)&=(Z_{00}-Z_{10}+Z_{01}+Z_{11})^2,\quad\quad\quad P^{-1,-1}(G,x)=(Z_{00}+Z_{10}+Z_{01}-Z_{11})^2.
\end{align*}

\end{proposition}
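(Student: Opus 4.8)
The plan is to extract both the squaring and the sign patterns directly from the loop expansion \eqref{eq:def p}. First I would record that each of the four characters attached to $(z,w)\in\{\pm1\}^2$ takes values in $\{\pm1\}$ and is a cocycle, so that $\varphi(\gamma)$ depends only on the $\Z_2$-homology class $[\gamma]\in H_1(\mathbb{T}^2;\Z_2)$. Consequently, any signed generating function $\sum_{\gamma\in\mathcal{E}(G)}\pm x(\gamma)$ whose signs depend only on $[\gamma]$ collapses, after grouping even subgraphs by homology class, into a signed combination of the four partial sums $Z_{00},Z_{10},Z_{01},Z_{11}$ — which is exactly the shape of the right-hand sides. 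The whole statement thus reduces to proving, for each such character $\varphi$, the square formula
\begin{equation}\label{plan:sq}
P^\varphi(G,x)=\Big(\sum_{\gamma\in\mathcal{E}(G)}(-1)^{q_\varphi([\gamma])}\,x(\gamma)\Big)^2,
\end{equation}
where $q_\varphi\colon H_1(\mathbb{T}^2;\Z_2)\to\Z_2$ is an exponent to be identified.

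To obtain the squaring in \eqref{plan:sq} I would argue exactly as in the planar Kac--Ward theorem. Writing $\det(I-T^\varphi)=\exp\big(-\sum_{n\ge1}\tfrac1n\operatorname{tr}((T^\varphi)^n)\big)$ and reading $\operatorname{tr}((T^\varphi)^n)$ as a sum over closed non-backtracking walks weighted by $x$, by $\varphi$, and by the telescoping Kac--Ward phase $\exp\big(\tfrac i2\sum\alpha(e,e')\big)$, the Feynman--Sherman resummation rewrites $P^\varphi(G,x)$ as a product over prime oriented loops. The point is that orientation reversal fixes the edge weight $x$, fixes the turning-number parity, and fixes $\varphi$ as well (since $\varphi$ takes values in $\{\pm1\}$); hence each prime loop is paired with its reverse into two equal factors, and the product is a perfect square. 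The same cancellation that underlies Sherman's identity then matches the square root with a sum $\sum_{\gamma\in\mathcal{E}(G)}\pm x(\gamma)$ indexed by even subgraphs. Equivalently, one keeps \eqref{eq:def p} and exhibits the two-to-one tracing of each even subgraph by the loop systems of $\Gamma(G)$.

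It remains to identify the exponent $q_\varphi([\gamma])$, i.e. the sign carried by each even subgraph in the square root. This sign is the product of the character value $\varphi(\gamma)$ and the turning-number parity of the traced curve, the latter being the mod-$2$ reduction of $\tfrac1{2\pi}\sum\alpha(e,e')$. By Whitney's formula and Johnson's description of spin structures, the turning-number parity of even subgraphs defines a \emph{quadratic refinement} of the mod-$2$ intersection form, so that $q_\varphi(\alpha+\beta)=q_\varphi(\alpha)+q_\varphi(\beta)+\alpha\cdot\beta$; adding the linear contribution of $\varphi$ preserves this relation. Thus $q_\varphi$ is one of the four quadratic refinements of the intersection form on $H_1(\mathbb{T}^2;\Z_2)$, and as $\varphi$ runs through the four characters $(z,w)\in\{\pm1\}^2$ it runs through all four of them.

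A quadratic refinement is determined by its two values on the basis $\{[\gamma_x],[\gamma_y]\}$, which I would read off from the definition of $T^\varphi$ on the curves $\gamma_x,\gamma_y$; the refinement relation then fixes its value on the diagonal class, and on the trivial class it is $0$. Running through the four characters produces exactly the four sign patterns $(+,-,-,-)$, $(+,+,-,+)$, $(+,-,+,+)$, $(+,+,+,-)$ on $(Z_{00},Z_{10},Z_{01},Z_{11})$ — the three Arf-zero refinements each giving a single minus sign and the Arf-one refinement, namely $(z,w)=(1,1)$, giving three minus signs — so that \eqref{plan:sq} becomes the four stated equalities after squaring. The main obstacle is precisely this sign analysis: one must prove that the telescoped Kac--Ward phase computes the turning number mod $2$ and that these parities glue into a single well-defined quadratic form, uniformly over all loop systems in $\Gamma(G)$ and with correct treatment of the boundary range $\alpha(e,e')\in(-\pi,\pi)$. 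Controlling $\epsilon(\gamma)$ to this precision is the technical heart, carried out in \cite{Cim2}.
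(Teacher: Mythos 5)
The paper does not prove this proposition: it is quoted verbatim from \cite{Cim2}, with only the remark that ``a closer analysis of the sign $\epsilon(\gamma)$'' yields it. Your sketch follows essentially the same route as that reference --- the loop expansion of $\det(I-T^\varphi)$, the pairing of prime loops with their reversals to produce a perfect square, and the identification of the resulting signs with the four quadratic refinements of the intersection form on $H_1(\mathbb{T}^2;\Z_2)$ --- and your Arf-invariant bookkeeping (three minus signs for $(z,w)=(1,1)$, a single minus sign for the other three characters) is consistent with the stated formulas, with the genuinely technical step (the uniform control of $\epsilon(\gamma)$ over all loop systems tracing a given even subgraph) correctly identified and, as in the paper, deferred to \cite{Cim2}.
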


In particular, this shows that for $z,w=\pm 1$, $P^{z,w}(G,x)$ is the square of a polynomial in the weight variables $x$.
By construction, the constant coefficient (with respect to the variables $x$) of the polynomial $P^{z,w}(G,x)$ is equal to $1$.
Denoting by $P^{z,w}(G,x)^{1/2}$ the square root of the polynomial with constant coefficient $+1$, we get the following formula.

\begin{theorem}[\cite{Cim2}]
\label{thm:Z}
The Ising partition function for a toric weighted graph $(G,x)$ is given by
\[
Z(G,x)=\frac{1}{2}\Big(-P^{1,1}(G,x)^{1/2}+P^{1,-1}(G,x)^{1/2}+P^{-1,1}(G,x)^{1/2}+P^{-1,-1}(G,x)^{1/2}\Big).
\]
\end{theorem}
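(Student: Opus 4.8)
The plan is to use Proposition~\ref{prop:Z}, which expresses each of the four Kac-Ward determinants $P^{z,w}(G,x)$ at $z,w=\pm1$ as the square of a signed combination of the partial partition functions $Z_\alpha$, $\alpha\in H_1(\mathbb{T}^2;\Z_2)$. Since the high-temperature expansion \eqref{high-temp} shows that the Ising partition function is proportional to $Z(G,x)=\sum_{\gamma\in\mathcal{E}(G)}x(\gamma)=Z_{00}+Z_{10}+Z_{01}+Z_{11}$, the goal is to recover this full (positively signed) sum from the four squared quantities. The essential idea is that taking square roots with the correct sign conventions allows the four signed combinations to be recombined so that the unwanted cross terms cancel and only $Z(G,x)$ survives.

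First I would identify the correct square root of each $P^{z,w}(G,x)$. By the remarks preceding the theorem, $P^{z,w}(G,x)$ is the square of a polynomial in $x$ whose constant coefficient is $1$, and $P^{z,w}(G,x)^{1/2}$ is defined to be the square root with constant term $+1$. The four bracketed expressions in Proposition~\ref{prop:Z} each have constant term $Z_{00}=1$ (the empty subgraph being the unique even subgraph contributing to the trivial homology class with no edges), so each of them \emph{is} the distinguished square root:
\begin{align*}
P^{1,1}(G,x)^{1/2}&=Z_{00}-Z_{10}-Z_{01}-Z_{11},&
P^{-1,1}(G,x)^{1/2}&=Z_{00}+Z_{10}-Z_{01}+Z_{11},\\
P^{1,-1}(G,x)^{1/2}&=Z_{00}-Z_{10}+Z_{01}+Z_{11},&
P^{-1,-1}(G,x)^{1/2}&=Z_{00}+Z_{10}+Z_{01}-Z_{11}.
\end{align*}
This sign-fixing is the delicate point: one must confirm that the polynomial inside each square in Proposition~\ref{prop:Z} has constant term exactly $+1$ rather than $-1$, so that the canonically normalized root is the stated expression and not its negative. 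This is precisely what determines the signs $(-,+,+,+)$ appearing in the statement of Theorem~\ref{thm:Z}.

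Granting the identification above, the proof is then a direct linear computation. Forming the combination $\tfrac12(-P^{1,1}{}^{1/2}+P^{1,-1}{}^{1/2}+P^{-1,1}{}^{1/2}+P^{-1,-1}{}^{1/2})$ and collecting terms, the coefficient of $Z_{00}$ is $\tfrac12(-1+1+1+1)=1$, while the coefficient of each of $Z_{10}$, $Z_{01}$, and $Z_{11}$ is $\tfrac12(1-1+1+1)$, $\tfrac12(1+1-1+1)$, and $\tfrac12(1+1+1-1)$ respectively; each of these equals $\tfrac12\cdot2=1$. Hence the combination equals $Z_{00}+Z_{10}+Z_{01}+Z_{11}=Z(G,x)$, as claimed. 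The main obstacle is therefore not the arithmetic of the final recombination, which is routine, but rather justifying the sign assignment in the previous paragraph; in the reference~\cite{Cim2} this rests on a careful analysis of the sign $\epsilon(\gamma)$ in \eqref{eq:def p} and of how the characters $\varphi=(\pm1,\pm1)$ interact with the homology classes, and it is this bookkeeping that fixes the pattern of signs in front of the four square roots.
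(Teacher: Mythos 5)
Your proposal is correct and follows exactly the route the paper intends: the theorem is stated as an immediate consequence of Proposition~\ref{prop:Z} once the square roots are normalized to have constant coefficient $+1$, and your observation that each bracketed combination has constant term $Z_{00}\big|_{x=0}=1$ (so that it \emph{is} the distinguished root) is precisely the sign-fixing step, after which the linear recombination is routine. The paper itself defers the proof to~\cite{Cim2}, but your argument supplies the deduction from Proposition~\ref{prop:Z} correctly and completely.
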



\section{Relation to dimers and consequences}

The aim of this section is to show that the Kac-Ward determinant $P^\varphi(G,x)$ is proportional to the
Kasteleyn determinant of an associated bipartite weighted graph $(C_G,y)$ (see Theorem~\ref{thm:corr} below). By Kenyon and Okounkov's~\cite[Theorem 1]{K-O}, it follows that the curve defined
by the zero-set of $P^\varphi(G,x)$ is a (simple) Harnack curve. It also implies some new avatar of Kramers-Wannier duality (Corollary~\ref{cor:KW1}).

Let $(G,x)\subset\mathbb{T}^2$ be a weighted graph, and let us parametrize $x_e\in(0,1)$ by $x_e=\tan(\theta_e/2)$ with $\theta_e\in(0,\pi/2)$.
Following Fan-Wu~\cite{F-W} and Dub\'edat~\cite{Dub}, let us consider the associated weighted graph $(C_G,y)\subset\mathbb{T}^2$ obtained from $G$ as follows.
Replace each edge $e$ of $G$ by a rectangle with the edges parallel to $e$ having weight $\sin(\theta_e)$
while the other two edges have weight $\cos(\theta_e)$. In each corner of each face of $G\subset\mathbb{T}^2$, we now have two vertices; join them with an edge of weight $1$. This is illustrated in
Figure~\ref{fig:C_G}. Note that since the torus is orientable, the graph $C_G$ is bipartite: its vertices can be split into two sets $B\sqcup W$ (say, black and white vertices) such that no edge of
$C_G$ joins two vertices of the same group.

\begin{figure}[Htb]
\labellist\small\hair 2.5pt
\pinlabel {$G$} at 280 370
\pinlabel {$C_G$} at 1200 370
\pinlabel {$\tan(\theta/2)$} at 300 215
\pinlabel {$\sin(\theta)$} at 1225 250
\pinlabel {$\sin(\theta)$} at 1225 115
\pinlabel {$\cos(\theta)$} at 1080 0
\pinlabel {$\cos(\theta)$} at 1300 0
\pinlabel {$1$} at 1055 270
\pinlabel {$1$} at 1055 103
\pinlabel {$1$} at 1375 260
\pinlabel {$1$} at 1375 109
\endlabellist
\centerline{\psfig{file=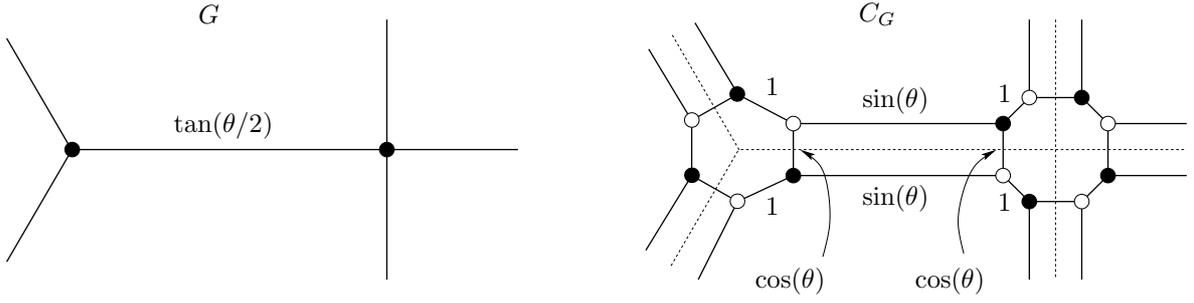,width=\textwidth}}
\caption{The weighted bipartite graph $C_G$ associated to the weighted graph $G$.}
\label{fig:C_G}
\end{figure}

Recall that a {\em Kasteleyn orientation\/}~\cite{Ka1,Ka2,Ka3} on a bipartite graph $C$ embedded in an orientable surface can be understood as a map $\omega\colon E(C)\to\{\pm 1\}$
such that for each face $f$ of $C$,
\[
(\delta\omega)(f):=\prod_{e\in\partial f}\omega(e)=(-1)^{\frac{|\partial f|}{2}+1}.
\]
The associated {\em Kasteleyn operator\/} $K(C,y)\colon\C^B\to\C^W$ can be defined by its matrix elements: for $w\in W$ and $b\in B$, set
\[
K_{wb}=\sum_{e:w\to b}\omega(e)y_e,
\]
the sum being over all edges of $C$ joining $w$ to $b$.
If $C\subset\mathbb{T}^2$ is endowed with a map $\varphi\colon\mathbb{E}(C)\to\C^*$ (for example, a 1-cocycle), then one can extend this definition to an operator $K^\varphi(C,y)\colon\C^B\to\C^W$
by multiplying the coefficient corresponding to the edge $e$ by $\varphi(e)$, where $e$ is oriented from the white to the black vertex.

The main result of this section is the following.

\begin{theorem}
\label{thm:corr}
For any weighted graph $(G,x)\subset\mathbb{T}^2$, there is a Kasteleyn orientation on $C_G\subset\mathbb{T}^2$ such that
\[
P^\varphi(G,x)= 2^{-|V(G)|}\prod_{e\in E(G)}(1+x_e^2)\,\det(K^\varphi(C_G,y))
\]
for all $\varphi\in H^1(\mathbb{T}^2;\C^*)\simeq(\C^*)^2$.
\end{theorem}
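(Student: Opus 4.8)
The plan is to establish the identity as an equality of Laurent polynomials in the character variables $(z,w)$ and polynomials in the weights $x=(x_e)$, by expanding both determinants combinatorially and matching them term by term. First I would record the structural facts. Writing $x_e=\tan(\theta_e/2)$ gives $\sin\theta_e=2x_e/(1+x_e^2)$ and $\cos\theta_e=(1-x_e^2)/(1+x_e^2)$, so the common denominator of all rectangle weights is exactly $\prod_{e\in E(G)}(1+x_e^2)$; clearing it produces the prefactor in the statement, while the factor $2^{-|V(G)|}$ will come from the vertex gadgets. I would also note the dimension match: each edge of $G$ becomes a rectangle contributing four vertices of $C_G$, two black and two white, so $|B|=|W|=2|E(G)|=|\EE(G)|$ and $\det(K^\varphi(C_G,y))$ is a $2|E(G)|\times2|E(G)|$ determinant, the same size as the Kac-Ward matrix $I-T^\varphi$.

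Next I would fix the combinatorial data and set up the bijection. I would choose an explicit Kasteleyn orientation $\omega$ on $C_G\subset\mathbb{T}^2$, prescribing it on the rectangle edges and on the weight-$1$ corner edges, and verify the defining condition $(\delta\omega)(f)=(-1)^{|\partial f|/2+1}$ on each face of $C_G$. The faces come in three types---inside a rectangle, around a vertex of $G$, and inside a face of $G$---so this is a finite local check using that $G$ is embedded with straight edges on the oriented torus. Expanding $\det(K^\varphi(C_G,y))$ over the dimer configurations of $C_G$ and $P^\varphi(G,x)=\det(I-T^\varphi)$ over the non-backtracking loop configurations of $\Gamma(G)$ appearing in~\eqref{eq:def p}, I would build a weight-preserving bijection between the two families. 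The bijection is local at each rectangle: how a dimer configuration covers the four rectangle vertices records whether the corresponding edge of $G$ is traversed and, at each vertex of $G$, how the incoming strands are paired, the weight-$1$ corner edges implementing the turns. The resulting products of $\sin$'s and $\cos$'s reproduce $x(\gamma)$ up to the scalars already isolated, with one factor of $2$ per vertex of $G$ coming from the local sum over the ways strands pair up at the corresponding gadget, which assembles into $2^{-|V(G)|}$.

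The main obstacle is the sign matching. The clean way to organize it is through the surface theory of Kasteleyn orientations: an orientation on a graph in $\mathbb{T}^2$ determines a quadratic form, equivalently a spin structure, on $H_1(\mathbb{T}^2;\Z_2)$, and the sign with which a dimer configuration (equivalently its loop class) enters $\det(K^\varphi)$ is governed by this form---the very mechanism behind the four squared expressions of Proposition~\ref{prop:Z}. On the Kac-Ward side, the half-angle phases $\exp(\tfrac{i}{2}\alpha(e,e'))$ assemble, via a Whitney-type rotation-number computation, into the sign $(-1)^{\epsilon(\gamma)}$ and hence into a quadratic form on $H_1(\mathbb{T}^2;\Z_2)$ as well. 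The heart of the proof is that the chosen orientation $\omega$ induces the same quadratic form, and in particular the same Arf invariant, which I would check by tracking how each sign changes when a loop is pushed across $\gamma_x$ or $\gamma_y$, thereby reducing the global comparison to the local face condition already imposed on $\omega$.

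Finally, since both sides are Laurent polynomials in $z$ and $w$, it suffices to prove the matching on each homology class $[\gamma]\in H_1(\mathbb{T}^2;\Z_2)$ separately, which is exactly what the bijection together with the sign computation provides; summing over the four classes yields the identity for all $\varphi\in H^1(\mathbb{T}^2;\C^*)$. I expect the two delicate points to be the face-condition verification for $\omega$ and the compatibility of the two quadratic forms, whereas the weight bookkeeping is routine once the trigonometric identities are in place.
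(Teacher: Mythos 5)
Your proposal takes a genuinely different route from the paper. The paper never expands either determinant over configurations: it manipulates the Kac--Ward matrix $I-T$ directly, first multiplying on the right by $I+iJ\varphi x$ (determinant $\prod_e(1+x_e^2)$) and then on the left by a block-diagonal operator $N=I-RQ$ (determinant $2^{|V(G)|}$), after which the resulting matrix $\cos(\theta)+iJ\varphi\sin(\theta)-RQ$ is \emph{literally} a Kasteleyn-type matrix of $(C_G,y)$ for an explicit $S^1$-valued cochain $\tilde\omega$; the only sign issue left is a single global unit constant, which is killed softly by noting that both sides are real for $\varphi\in\{\pm1\}^2$ and that one is free to choose the Kasteleyn orientation. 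Your route --- expand $\det K^\varphi$ over dimer configurations, expand $P^\varphi$ over $\Gamma(G)$ via~\eqref{eq:def p}, and match term by term with a spin-structure/Arf-invariant argument for the signs --- is the classical Fan--Wu/Fisher-style strategy, and it can in principle be carried out; its advantage is that it makes the combinatorial content of the correspondence visible, whereas the paper's linear algebra buys a complete bypass of the configuration-by-configuration sign comparison.

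That said, as written your proposal defers exactly the step that constitutes the theorem. The identification of the Whitney-index signs $(-1)^{\epsilon(\gamma)}$ with a quadratic form on $H_1(\mathbb{T}^2;\Z_2)$, and the proof that some Kasteleyn orientation on $C_G$ induces the \emph{same} form, is the entire difficulty of Kac--Ward theory (it is the point on which the original Kac--Ward argument was incomplete for decades), and ``I would check by tracking how each sign changes when a loop is pushed across $\gamma_x$ or $\gamma_y$'' is a promissory note rather than an argument --- note also that a global sign fix at the end, as in the paper, is not available to you, since in your setup the discrepancy is a priori homology-class-dependent. Two further points are less routine than you claim. First, the Kac--Ward expansion runs over $\Gamma(G)$, whose elements may traverse an edge twice in opposite directions (contributing $x_e^2$), not over even subgraphs; so the local dictionary at a rectangle must handle three states (edge unused, used once with a direction, used twice), and the correspondence is many-to-many with cancellations, not a bijection. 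Second, the prefactor bookkeeping does not come from ``clearing a common denominator'': a dimer configuration uses $0$, $1$ or $2$ edges of the rectangle of $e$, so the power of $(1+x_e^2)^{-1}$ varies from term to term and only equalizes after local resummation, at which point the factor $2^{|V(G)|}$ must also be extracted from the vertex gadgets. None of this is fatal, but the proof is not complete until the quadratic-form compatibility and the local resummation (including doubled edges) are actually carried out.
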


\begin{proof}
The strategy of the proof is to gradually transform the Kac-Ward matrix for $(G,x)$ into the Kasteleyn matrix $K^\varphi(C_G,y)$ while keeping track, at each step, of the effect of the transformation
on the determinant. Let us begin with some notation. We shall write $\L({\EE})$ for the complex vector space spanned by the set ${\EE}$ of oriented edges of $G$. Obviously, ${\EE}$ can be partitioned into
${\EE}=\bigsqcup_{v\in V(G)}E_v$, where $E_v$ contains all oriented edges $e$ with origin $o(e)=v$. Now, let us cyclically order the elements of $E_v$ by turning counterclockwise
around $v$. (As $\mathbb{T}^2$ is orientable, this can be done in a consistent way.) Given $e\in E_v$, let $R(e)$ denote the next edge with respect to this cyclic order, as illustrated in
Figure~\ref{fig:R}. This induces an automorphism $R$ of $\L({\EE})$. Also, let $J$ denote the automorphism of $\L({\EE})$ given by $J(e)=\bar{e}$. Finally, we shall write $x$ for the automorphism of
$\L(\EE)$ given by $x(e)=x_e\,e$, and similarly for any weight system and for $\varphi$. 

\begin{figure}[Htb]
\labellist\small\hair 2.5pt
\pinlabel {$e$} at 282 25
\pinlabel {$R(e)$} at 340 265
\pinlabel {$R^2(e)$} at 210 375
\pinlabel {$R^3(e)$} at 70 370
\pinlabel {$\vdots$} at 0 200
\pinlabel {$R^{-1}(e)$} at 100 17
\pinlabel {$\beta_e$} at 220 170
\endlabellist
\centerline{\psfig{file=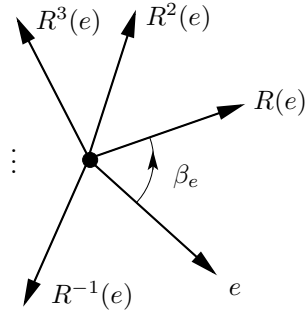,height=4cm}}
\caption{The endomorphism $R$ of $\L({\EE})$, and the angle $\beta_e$.}
\label{fig:R}
\end{figure}

Let $\mathit{Succ}\in\mathit{End}(\L({\EE}))$ be defined as follows: if $e$ is an oriented edge with terminus $t(e)=v$, then
\[
\mathit{Succ}(e)=\varphi(e)x_e\sum_{e'\in E_v}\omega(e,e')\,e',
\]
where $\omega(e,e')=\exp(\frac{i}{2}\alpha(e,e'))$ for $e'\neq \bar{e}\in E_v$ with $\alpha(e,e')$ as in Figure~\ref{fig:alpha}, and $\omega(e,\bar{e})=-i$.
Also, let $T\in\mathit{End}(\L({\EE}))$ be the endomorphism given by $T=\mathit{Succ}+iJ\varphi x$. By definition, the Kac-Ward matrix is equal to
$I-T$. Now, consider the matrix
\[
A=(I-T)(I+iJ\varphi x)=I-\mathit{Succ}+\mathit{Com},
\]
where
\[
\mathit{Com}(e)=-i\varphi(e) x_eT(\bar{e})=-i x_e^2\sum_{\genfrac{}{}{0pt}{}{e'\in E_v}{e'\neq e}}\omega(\bar{e},e')\,e'
\]
if $e$ has origin $o(e)=v$. Since
\[
\det(I+iJ\varphi x)=\prod_{e\in E}\det\left(\begin{array}{cc}1&i\varphi(\bar{e}) x_e\cr i\varphi(e) x_e&1\cr\end{array}\right)=\prod_{e\in E}(1+ x_e^2),
\]
we get the equality
\begin{equation}\label{equ:PA}
P^\varphi(G,x)=\det(I-T)=\prod_{e\in E}(1+x_e^2)^{-1}\det A.
\end{equation}

We now focus on the computation of $\det A$. We shall transform the matrix $A$ by multiplying it to the left with some well-chosen matrix $N$, whose determinant is easy to compute.
Let $Q\in\mathit{End}(\L({\EE}))$ be defined by
\[
Q(e)=\exp\left({\textstyle\frac{i}{2}}\beta_e\right)e,
\]
where $\beta_e=\pi-\alpha(J(R(e)),e)\in(0,2\pi)$ denotes the angle between $e$ and $R(e)$ (see Figure~\ref{fig:R}). Obviously, the endomorphism $N:=I-RQ$ decomposes into
$N=\bigoplus_{v\in V(G)}N_v$ with $N_v\in\mathit{End}(\L({E}_v))$, and one easily computes
\[
\det N_v=1-\prod_{e\in E_v}\exp\left({\textstyle\frac{i}{2}}\beta_e\right)=2.
\]
Hence, the determinant of $N$ is $2^{|V(G)|}$. Now, let us compute the composition $NA$. If $e$ has terminus $t(e)=v$, then
\begin{align*}
N\mathit{Succ}(e)&=(I-RQ)\varphi(e) x_e\sum_{e'\in E_v}\omega(e,e')\,e'\cr
	&=\varphi(e) x_e\sum_{e'\in E_v}\left(\omega(e,e')-\omega(e,R^{-1}(e'))\exp\left({\textstyle\frac{i}{2}}\beta_{R^{-1}(e')}\right)\right)e'\cr
	&=-2i\varphi(e) x_e\,\bar{e}.
\end{align*}
Therefore, we have the equality $N\mathit{Succ}=-2iJ\varphi x$. Similarly, given $e$ with origin $o(e)=v$,
\begin{align*}
N\mathit{Com}(e)=&(I-RQ)(-i)x_e^2\sum_{e'\in E_v\setminus\{e\}}\omega(\bar{e},e')\,e'\cr
	=&-ix_e^2\hskip-.2cm\sum_{e'\in E_v\setminus\{e,R(e)\}}\hskip-.2cm\left(\omega(\bar{e},e')-\omega(\bar{e},R^{-1}(e'))\exp\left({\textstyle\frac{i}{2}}\beta_{R^{-1}(e')}\right)\right)e'\cr
	 &-ix_e^2\left(\omega(\bar{e},R(e))\,R(e)-\omega(\bar{e},R^{-1}(e))\exp\left({\textstyle\frac{i}{2}}\beta_{R^{-1}(e)}\right)\,e\right)\cr
	=&- x^2_e(I+RQ)(e).
\end{align*}
These two equalities lead to
\begin{align*}
NA&=N(I-\mathit{Succ}+\mathit{Com})\cr
	&=(I-RQ)+2iJ\varphi x-(I+RQ) x^2\cr
	&=(1-x^2)+2iJ\varphi x-RQ(1+ x^2).
\end{align*}
Since the determinant of $N$ is $2^{|V(G)|}$, the equality displayed above together with Equation~\eqref{equ:PA} give
\begin{equation}\label{equ:PM}
P^\varphi(G,x)=2^{-|V(G)|}\prod_{e\in E(G)}(1+x_e^2)\det M,
\end{equation}
where $M$ is given by
\[
M=\frac{1-x^2}{1+x^2}+iJ\varphi\,\frac{2x}{1+x^2}-RQ=\cos(\theta)+iJ\varphi\sin(\theta)-RQ,
\]
using the paramatrization $x_e=\tan(\theta_e/2)$ of the weights.

The final step is now to show that this matrix $M$ is conjugate to the Kasteleyn matrix $K^\varphi(C_G,y)$. The theorem will then follow from Equation~\eqref{equ:PM}.
Let $\psi_B\colon\EE\to B$ (resp. $\psi_W\colon\EE\to W$) denote the bijection mapping each oriented edge $e$ of $G$ to the unique black (resp. white) vertex of $C_G$ immediately to the right
(resp. left) of $e$, as illustrated below. Observe that the three maps $\psi_B\circ\psi_W^{-1}$, $\psi_B\circ J\circ \psi_W^{-1}$ and $\psi_B\circ R\circ\psi_W^{-1}$ associate to a fixed $w\in W$ the three black vertices
of $C_G$ adjacent to $w$.

\begin{figure}[h]
\labellist\small\hair 2.5pt
\pinlabel {$e=\psi_W^{-1}(w)$} at 530 15
\pinlabel {$R(e)$} at -40 160
\pinlabel {$\psi_B(R(e))$} at 90 160
\pinlabel {$\psi_B(e)$} at 215 10
\pinlabel {$\psi_B(J(e))$} at 445 115
\pinlabel {$w$} at 157 115
\endlabellist
\centerline{\psfig{file=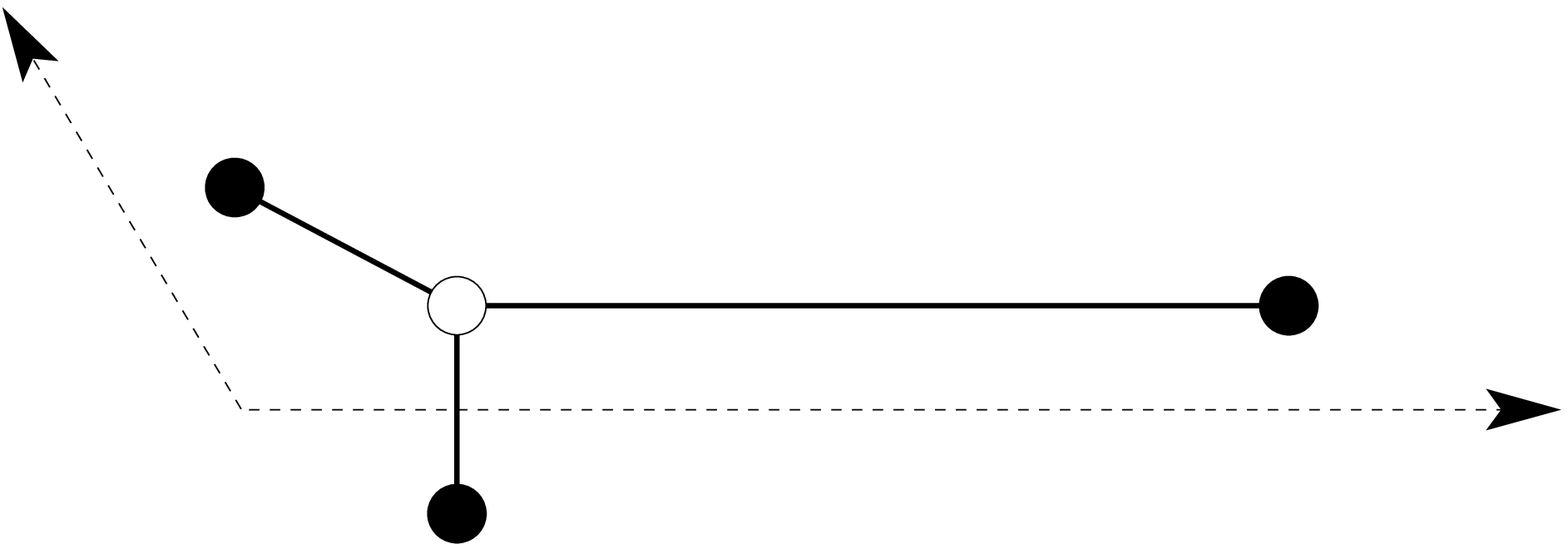,height=2.3cm}}
\end{figure}

\noindent Therefore, the operator
\[
\widetilde{K}^\varphi=(\psi_B\circ M\circ\psi_W^{-1})^*\colon\C^B\to\C^W
\]
is given by the coefficients
\[
\widetilde{K}^\varphi_{wb}=
\begin{cases}
\cos(\theta_e)&\text{if the edge $(w,b)$ is perpendicular to $e\in\EE$;}\\ 
\varphi(e) i \sin(\theta_e)&\text{if $(w,b)$ is to the left of $e\in\EE$;} \\
-\exp\left({\textstyle\frac{i}{2}}\beta_e\right)&\text{if $(w,b)$ is in the ``corner" of $e$ and $R(e)$;} \\
0 & \text{if $w$ and $b$ are not adjacent in $C_G$.}
\end{cases}
\]
This is illustrated below.

\begin{figure}[h]
\labellist\small\hair 2.5pt
\pinlabel {$e$} at 610 75
\pinlabel {$R(e)$} at -40 260
\pinlabel {$\varphi(e) i \sin(\theta_e)$} at 340 140
\pinlabel {$\varphi(e)^{-1} i \sin(\theta_e)$} at 340 10
\pinlabel {$\cos(\theta_e)$} at 95 40
\pinlabel {$\cos(\theta_e)$} at 590 15
\pinlabel {$-\exp\left({\textstyle\frac{i}{2}}\beta_e\right)$} at 310 220
\endlabellist
\centerline{\psfig{file=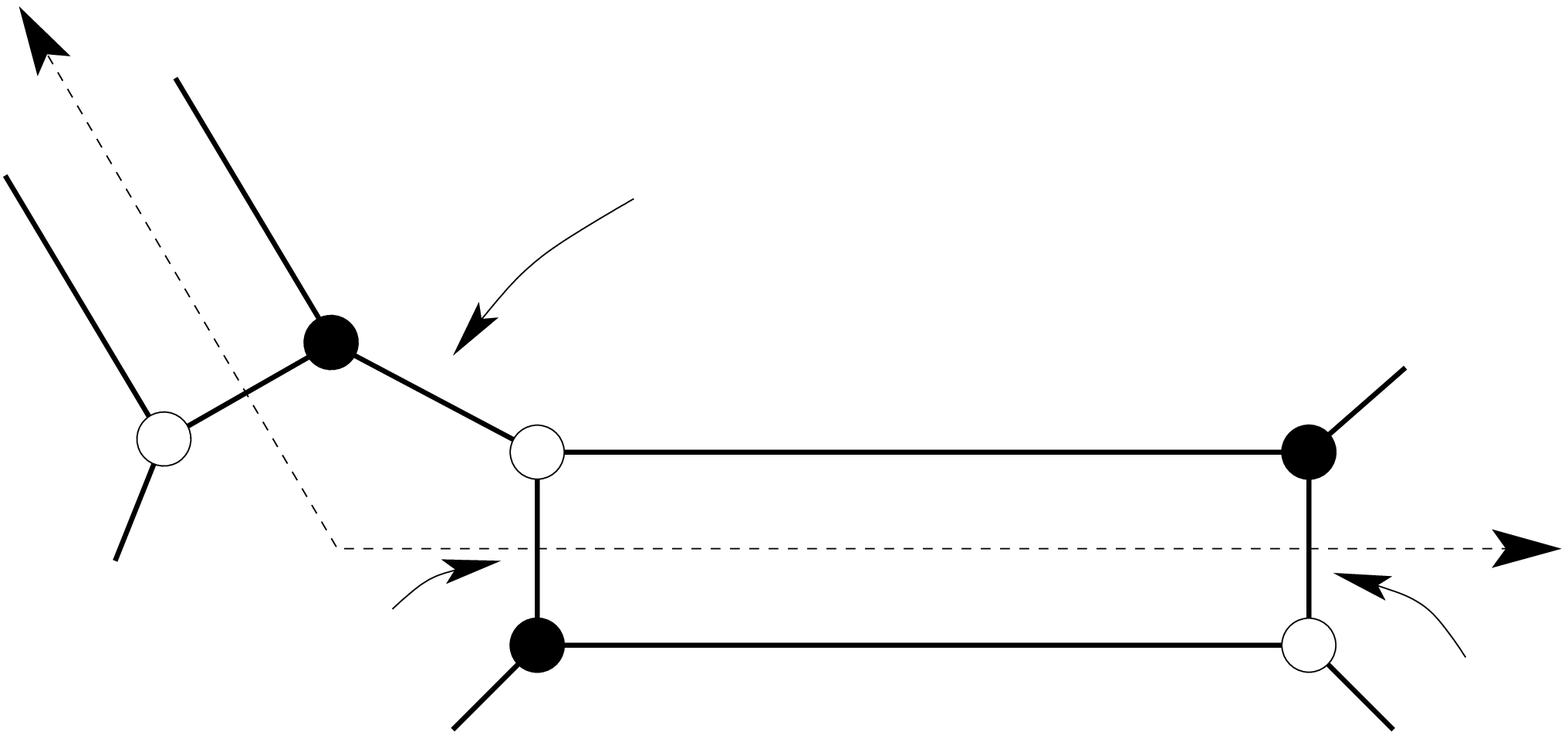,height=3cm}}
\end{figure}

\noindent In other words, $\widetilde{K}^\varphi$ is precisely the Kasteleyn operator of $(C_G,y)$ associated to the 1-cocycle $\tilde\varphi\colon\EE(C_G)\to\C^*$ given by
\[
\tilde\varphi(w,b)=
\begin{cases}
\varphi(e)&\text{if $(w,b)$ runs parallel to $e\in\EE$;}\\ 
1&\text{else,}
\end{cases}
\]
and to the map $\tilde\omega\colon E(C_G)\to S^1\subset\C^*$ given by
\[
\tilde\omega(w,b)=
\begin{cases}
1&\text{if $(w,b)$ is perpendicular to $e\in\EE$;}\\ 
i&\text{if $(w,b)$ is to the left of $e\in\EE$;} \\
-\exp\left({\textstyle\frac{i}{2}}\beta_e\right)&\text{if $(w,b)$ is in the corner of $e$ and $R(e)$.}
\end{cases}
\]
Since the 1-cocycles $\varphi$ and $\tilde\varphi$ induce the same class in $H^1(\mathbb{T}^2;\C^*)$, it only remains to handle the map $\tilde\omega$.
Extend it to a 1-cochain $\tilde\omega\colon\EE(C_G)\to S^1$ by setting $\tilde\omega(b,w)=\tilde\omega(w,b)^{-1}$. Now, observe that for any face
$f$ of $C_G\subset\mathbb{T}^2$,
\[
(\delta\tilde\omega)(f):=\prod_{e\in\partial f}\tilde\omega(e)=(-1)^{\frac{|\partial f|}{2}+1}.
\]
This is obvious for the rectangular faces; for the faces corresponding to vertices of $G$, use the fact that the angles $\beta_e$ add up to $2\pi$ around each vertex;
for the faces corresponding to faces of $G$, use the fact that the angles $\alpha(e,e')$ add up to $2\pi$ around each face. Furthermore, since we are working on the flat torus,
one easily checks that $\tilde\omega(\gamma)=\pm 1$ if $\gamma$ denotes a 1-cycle in $C_G\subset\mathbb{T}^2$. Therefore, $\tilde\omega$ is cohomologous to a Kasteleyn orientation $\omega$.
In other words, $\tilde\omega$ can be transformed into $\omega$ by a sequence of the following transformation: multiply all the edges adjacent to a fixed vertex of $C_G$ by some complex number
of modulus 1. Therefore, a Kasteleyn matrix $K^\varphi(C_G,y)$ can be obtained from $\widetilde{K}^\varphi$ by multiplying lines and columns by complex numbers of modulus 1, and the equality
\[
P^\varphi(G,x)= 2^{-|V(G)|}\prod_{e\in E(G)}(1+x_e^2)\,\det(K^\varphi(C_G,y))
\]
holds up to multiplication by a fixed complex number of modulus 1. For $\varphi$ taking values in $\{\pm 1\}$, both sides are real; therefore, the equality holds up to sign. One can then choose
a Kasteleyn orientation such that the identity holds.
\end{proof}

Let us now fix a geometric basis of $H_1(\mathbb{T}^2;\Z)$. As explained in Section~\ref{sec:KW}, this allows to identify any element $\varphi$ of
$H^1(\mathbb{T}^2;\C^*)$ with a pair of non-zero complex numbers $(z,w)$ and to write $P^\varphi(G,x)=P^{z,w}(G,x)$.
The {\em Newton polygon\/} of such a polynomial $P^{z,w}(G,x)=\sum_{n,m\in\Z}a_{nm}z^nw^m$ is the convex hull of $\{(n,m)\in\Z^2\,;\,a_{nm}\neq 0\}$.
Recall that a weighted graph $(G,x)\subset\mathbb{T}^2$ is {\em non-degenerate\/} if all the weights are positive and the complement of $G$ in the torus consists in topological discs. Using~(\ref{eq:def p}), one easily checks that for such graphs, the Newton polygon of $P^{z,w}(G,x)$ has positive area.
Theorem~\ref{thm:corr} and~\cite[Theorem 1]{K-O} then immediately imply that (the real part of) the associated spectral curve
\[
A=\{(z,w)\in(\C^*)^2\,;\,P^{z,w}(G,x)=0\}\subset(\C^*)^2
\]
is a (possibly singular) {\em Harnack curve\/}. This means that the curve $A\subset(\C^*)^2$ intersects each torus $\mathbb{T}^2(r,s)=\{(z,w)\in(\C^*)^2\,;\,|z|=r,|w|=s\}$
in at most two points (see~\cite{M-R}).

Recall that an element $(z_0,w_0)$ of a complex curve $A=\{(z,w)\in(\C^*)^2\,;\,P(z,w)=0\}$ is called a {\em singularity\/} of $A$ if
$\frac{\partial}{\partial z}P(z,w)=\frac{\partial}{\partial w}P(z,w)=0$. As shown in~\cite[Lemma 6]{M-R}, Harnack curves only admit (a very specific type of) real singularities,
meaning that both coordinates are real. In our case, this translates into the following corollary:

\begin{corollary}
\label{cor:sing}
The spectral curve associated to a non-degenerate weighted graph embedded in the torus has no singularities other than real ones.
\end{corollary}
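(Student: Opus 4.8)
The plan is to read the corollary as a direct consequence of the two facts assembled in the paragraph preceding its statement, the only subtlety being to check that the proportionality of Theorem~\ref{thm:corr} carries both the zero set \emph{and} the singular locus of the Kac-Ward determinant over to those of the Kasteleyn determinant.

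First I would record that, by Theorem~\ref{thm:corr}, for every $\varphi\in H^1(\mathbb{T}^2;\C^*)$ identified with $(z,w)\in(\C^*)^2$ one has
\[
P^{z,w}(G,x)=c\,\det\bigl(K^{z,w}(C_G,y)\bigr),\qquad c=2^{-|V(G)|}\prod_{e\in E(G)}(1+x_e^2),
\]
where the factor $c$ is a fixed positive real number, independent of $(z,w)$ since the weights $x_e$ are real. Hence the spectral curve $A=\{P^{z,w}(G,x)=0\}$ coincides as a set with the characteristic variety of the dimer model on $(C_G,y)$. Moreover, differentiating through a nonzero constant changes nothing: $\partial_z P^{z,w}=c\,\partial_z\det K^{z,w}$ and $\partial_w P^{z,w}=c\,\partial_w\det K^{z,w}$, so a point $(z_0,w_0)\in A$ is a singularity of $A$ in the sense defined above exactly when it is a singularity of the dimer spectral curve.

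Next I would invoke the Harnack property. Non-degeneracy of $(G,x)$ ensures, via the expansion~\eqref{eq:def p}, that the Newton polygon of $P^{z,w}(G,x)$ has positive area, so that the dimer model on $(C_G,y)$ genuinely falls under the hypotheses of~\cite[Theorem~1]{K-O}. That theorem identifies its spectral curve as a simple Harnack curve; by the previous step, $A$ is that very curve. Finally I would apply~\cite[Lemma~6]{M-R}, which asserts that a Harnack curve admits only real singularities, both of whose coordinates are real. Transporting this conclusion back through the identification of singular loci established in the first step yields precisely the assertion that $A$ has no singularities other than real ones.

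The one place demanding care is the bookkeeping of the first step: one must confirm that the constant $c$ is genuinely nonvanishing and $(z,w)$-independent, so that neither the vanishing locus nor the singular locus is disturbed when passing between $P^{z,w}$ and $\det K^{z,w}$. Beyond this verification there is no real obstacle, the entire substance of the statement being already contained in~\cite{K-O} and~\cite{M-R}.
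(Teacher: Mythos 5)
Your argument is correct and follows exactly the route the paper itself takes: Theorem~\ref{thm:corr} identifies the Kac-Ward determinant with the Kasteleyn determinant up to a positive constant independent of $(z,w)$, so by~\cite[Theorem 1]{K-O} the spectral curve is a Harnack curve, and~\cite[Lemma 6]{M-R} then forces all singularities to be real. Your extra care in checking that the constant of proportionality does not disturb the vanishing or singular loci is a sensible (if routine) verification that the paper leaves implicit.
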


We now turn to Kramers-Wannier type duality for the Kac-Ward determinants. Again, these results hold for the more general case of an arbitrary weighted graph embedded in a closed orientable surface.
For the simplicity of this exposition, we shall only consider the special case of the torus.

If $G$ is embedded in the torus, its {\em dual\/} is the graph $G^*\subset\mathbb{T}^2$ obtained as follows: each face of $G\subset\mathbb{T}^2$
defines a vertex of $G^*$, and each edge of $G$ bounding two faces of $G\subset\mathbb{T}^2$ defines an edge between the two corresponding vertices of $G^*$. Note that $(G^*)^*=G$. Finally, if $G$ is endowed with weights $x=(x_e)\in(0,1)^{E(G)}$, define the {\em dual weights} $x^*=(x^*_e)\in(0,1)^{E(G)}$ via the condition $x+x^*+xx^*=1$.
If we use the parametrization $x=\tan(\theta/2), x^*=\tan(\theta^*/2)$, then $\theta$ and $\theta^*$ are simply related by $\theta+\theta^*=\pi/2$.
Therefore, the weighted graph $(C_{G^*},y(x^*))$ associated to $(G^*,x^*)$ is equal to the weighted graph $(C_G,y(x))$ associated to $(G,x)$.
Hence, Theorem~\ref{thm:corr} together with the equality $\frac{1+x^2}{1+x}=\frac{1+(x^*)^2}{1+x^*}$ immediately lead to the following.

\begin{corollary}
\label{cor:KW1}
For any toric weighted graph $(G,x)$ and any $\varphi\in H^1(\mathbb{T}^2;\C^*)$,
\[
2^{|V(G)|}\prod_{e\in E(G)}(1+x_e)^{-1} P^\varphi(G,x)=2^{|V(G^*)|}\prod_{e\in E(G)}(1+x^*_e)^{-1} P^\varphi(G^*,x^*).\qed
\]
\end{corollary}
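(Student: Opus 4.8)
The plan is to derive this duality as a formal consequence of the dimer identification in Theorem~\ref{thm:corr}, exploiting the fact that the Fan--Wu graph $C_G$ is insensitive to planar duality of $G$. First I would make the dual weights explicit: solving $x+x^*+xx^*=1$ gives $x^*=\frac{1-x}{1+x}$, which in the parametrization $x=\tan(\theta/2)$, $x^*=\tan(\theta^*/2)$ is precisely $\theta+\theta^*=\pi/2$. From this I would record the elementary identity $\frac{1+x^2}{1+x}=\frac{1+(x^*)^2}{1+x^*}$, obtained by noting that $1+x^*=\frac{2}{1+x}$ and $1+(x^*)^2=\frac{2(1+x^2)}{(1+x)^2}$. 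This identity is exactly what converts the normalization $\prod_e(1+x_e^2)$ appearing in Theorem~\ref{thm:corr} into the $\prod_e(1+x_e)$ appearing in the statement.

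The geometric core of the argument is the observation that the weighted bipartite graph associated to $(G^*,x^*)$ is literally the one associated to $(G,x)$, i.e. $(C_{G^*},y(x^*))=(C_G,y(x))$. Indeed, each edge of $G^*$ crosses the corresponding edge $e$ of $G$ perpendicularly, so the rectangle that replaces the dual edge occupies the same position as the rectangle replacing $e$, but with its two pairs of parallel sides interchanged. Since passing from $\theta_e$ to $\theta_e^*=\pi/2-\theta_e$ interchanges $\sin\theta_e$ and $\cos\theta_e$, the weights of these interchanged sides match, and the corner edges of weight $1$ are shared. Hence the two constructions yield the same weighted graph in $\mathbb{T}^2$.

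With these two facts in hand, I would apply Theorem~\ref{thm:corr} to $(G,x)$ and to $(G^*,x^*)$ for one and the same character $\varphi\in H^1(\mathbb{T}^2;\C^*)$, giving $2^{|V(G)|}\prod_e(1+x_e^2)^{-1}P^\varphi(G,x)=\det K^\varphi(C_G,y)$ together with the analogous formula for $G^*$. Because the two Fan--Wu graphs coincide and the cocycles induced on $C_G$ by $\varphi$ represent the same cohomology class, the two Kasteleyn determinants are equal; eliminating them yields $2^{|V(G)|}\prod_e(1+x_e^2)^{-1}P^\varphi(G,x)=2^{|V(G^*)|}\prod_e(1+(x^*_e)^2)^{-1}P^\varphi(G^*,x^*)$. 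Multiplying both sides by the quantity $\prod_e\frac{1+x_e^2}{1+x_e}=\prod_e\frac{1+(x^*_e)^2}{1+x^*_e}$, equal by the edgewise identity above, then produces the claimed equality.

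The step I expect to require the most care is the elimination of the Kasteleyn determinants: I must check that $\det K^\varphi(C_{G^*},y)$ and $\det K^\varphi(C_G,y)$ are \emph{genuinely} equal, and not merely equal up to a phase or a sign. This amounts to confirming that the $1$-cocycles on $C_G=C_{G^*}$ induced from $\varphi$ via $G$ and via $G^*$ lie in the same class (both being pullbacks of $\varphi\in H^1(\mathbb{T}^2;\C^*)$), and that a single Kasteleyn orientation serves both computations, so that Theorem~\ref{thm:corr} is invoked with consistent sign normalizations on the two sides. Everything else is a direct substitution.
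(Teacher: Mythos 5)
Your proposal is correct and follows essentially the same route as the paper: the identification $(C_{G^*},y(x^*))=(C_G,y(x))$ together with the edgewise identity $\frac{1+x^2}{1+x}=\frac{1+(x^*)^2}{1+x^*}$, fed into Theorem~\ref{thm:corr} for both $(G,x)$ and $(G^*,x^*)$, is precisely the paper's (one-line) argument. The only difference is that you explicitly flag the sign/phase issue in equating the two Kasteleyn determinants, a point the paper leaves implicit.
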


As mentioned in Proposition~\ref{prop:Z}, for $\varphi=(z,w)\in\{\pm 1\}^2$, $P^\varphi(G,x)$ is the square of a polynomial in the weight variables $x_e$.
As the constant coefficient of $P^\varphi(G,x)$ is equal to $1$, we can pick such a square root $P^\varphi(G,x)^{1/2}$ by requiring its constant coefficient to be $+1$.
Taking a closer look at the sign leads to the following duality.

\begin{corollary}
\label{cor:KW2}
For any toric weighted graph $(G,x)$ and any $\varphi\in H^1(\mathbb{T}^2;\{\pm 1\})$,
\[
2^{|V(G)|/2}\prod_{e\in E(G)}(1+x_e)^{-1/2} P^\varphi(G,x)^{1/2}=(-1)^{A(\varphi)}2^{|V(G^*)|/2}\prod_{e\in E(G)}(1+x^*_e)^{-1/2} P^\varphi(G^*,x^*)^{1/2},
\]
where $A(\varphi)=1$ if $\varphi=(1,1)$, and $A(\varphi)=0$ else.
\end{corollary}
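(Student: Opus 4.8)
The plan is to take square roots in Corollary~\ref{cor:KW1} and then pin down the resulting sign. First I would note that squaring the asserted identity, using $(-1)^{2A(\varphi)}=1$, reproduces Corollary~\ref{cor:KW1} verbatim; hence the two sides of Corollary~\ref{cor:KW2} coincide up to a global sign $s(\varphi)\in\{\pm1\}$. To see that this sign is genuinely independent of the weights, I would eliminate the radicals using the relation $1+x^*_e=2/(1+x_e)$ coming from $x+x^*+xx^*=1$. Since $P^\varphi(G,x)^{1/2}$ and $P^\varphi(G^*,x^*)^{1/2}$ are, by Proposition~\ref{prop:Z}, honest polynomials in the edge weights with constant term $+1$, this substitution turns Corollary~\ref{cor:KW2} into an identity between two polynomials in $x$ that agree up to a single overall sign. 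It therefore suffices to evaluate both sides at one convenient point.

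The natural point is the limit $x\to 0$, at which $x^*\to\mathbf 1$ (all dual weights equal to $1$). On the left, $P^\varphi(G,x)^{1/2}\to 1$ and $\prod_e(1+x_e)^{-1/2}\to1$, so the left-hand side tends to $2^{|V(G)|/2}$. On the right, $\prod_e(1+x^*_e)^{-1/2}\to 2^{-|E(G)|/2}$, and Euler's relation on the torus, $|V(G^*)|=|F(G)|=|E(G)|-|V(G)|$, collapses the prefactor to $(-1)^{A(\varphi)}2^{-|V(G)|/2}$. Thus the whole statement reduces to computing the all-ones evaluation $P^\varphi(G^*,\mathbf 1)^{1/2}$, and Corollary~\ref{cor:KW2} will follow once I establish
\[
P^\varphi(G^*,\mathbf 1)^{1/2}=(-1)^{A(\varphi)}\,2^{|V(G)|}.
\]

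The combinatorial heart is the following. By Proposition~\ref{prop:Z}, $P^\varphi(G^*,\mathbf 1)^{1/2}$ is a signed sum $N_{00}\pm N_{10}\pm N_{01}\pm N_{11}$, where $N_\alpha$ is the number of even subgraphs of $G^*$ in the $\mathbb{F}_2$-homology class $\alpha\in H_1(\mathbb{T}^2;\mathbb{F}_2)$ and the signs are dictated by $\varphi$. I would show these even subgraphs are equidistributed over the four classes, $N_\alpha=2^{|V(G)|-1}$ for every $\alpha$. By Fourier inversion on $H_1(\mathbb{T}^2;\mathbb{F}_2)$ this is equivalent to the vanishing, for each nonzero $\beta$, of the character sum $\sum_{\gamma}(-1)^{[\gamma]\cdot\beta}$ over the cycle space $\mathcal{Z}(G^*)=H_1(G^*;\mathbb{F}_2)$. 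That vanishing holds because $\gamma\mapsto[\gamma]\cdot\beta$ is a nonzero $\mathbb{F}_2$-functional on $\mathcal{Z}(G^*)$: the inclusion $G^*\hookrightarrow\mathbb{T}^2$ of the cellular embedding induces a surjection $H_1(G^*;\mathbb{F}_2)\to H_1(\mathbb{T}^2;\mathbb{F}_2)$, and the intersection form there is nondegenerate. The total count $\sum_\alpha N_\alpha=|\mathcal{E}(G^*)|=2^{|E(G^*)|-|V(G^*)|+1}=2^{|V(G)|+1}$ (here $G^*$ is connected, as $G$ is non-degenerate) then fixes the common value $2^{|V(G)|-1}$. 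Substituting this into the four signed sums of Proposition~\ref{prop:Z} gives $-2^{|V(G)|}$ for $\varphi=(1,1)$ and $+2^{|V(G)|}$ for the three other characters, which is exactly the displayed identity.

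The main obstacle is this last sign bookkeeping. The signs attached to the $N_\alpha$ in Proposition~\ref{prop:Z} do \emph{not} assemble into a linear character of $H_1(\mathbb{T}^2;\mathbb{F}_2)$ — in the $\varphi=(1,1)$ sector all three nontrivial classes are weighted by $-1$ — so the asymmetry singling out $(1,1)$ is a genuine manifestation of the quadratic (spin-structure) refinement, not of a naive homological duality; one must check the arithmetic of these four sums carefully rather than invoking a character argument for the final step. Everything else — the squaring reduction, the constancy of the comparison sign, and the $x\to0$ evaluation — is routine once the equidistribution $N_\alpha=2^{|V(G)|-1}$ is in hand.
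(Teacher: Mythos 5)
Your proof is correct, and while it shares the paper's overall skeleton --- square the identity to reduce to Corollary~\ref{cor:KW1}, observe that the two sides then agree up to a weight-independent sign, and pin that sign down by evaluating at a single point --- the key step is carried out by a genuinely different computation. The paper evaluates at $x=1$ (so $x^*=0$) and determines all four signs $A(\varphi)$ simultaneously from one equation: plugging $x=1$ into Theorem~\ref{thm:Z} gives $2^{|V(G^*)|+1}=\frac12\bigl(-(-1)^{A(1,1)}+(-1)^{A(1,-1)}+(-1)^{A(-1,1)}+(-1)^{A(-1,-1)}\bigr)2^{|V(G^*)|}$, since $Z(G,1)$ counts the cycle space of $G$; the bracket must equal $4$, which forces every summand to be $+1$. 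You evaluate at the dual point $x=0$ (so $x^*=1$) and instead compute each $P^\varphi(G^*,\mathbf 1)^{1/2}$ separately via Proposition~\ref{prop:Z}, which requires the extra combinatorial lemma that even subgraphs of $G^*$ are equidistributed over the four classes of $H_1(\mathbb{T}^2;\Z_2)$; your character-sum argument for this (surjectivity of $H_1(G^*;\Z_2)\to H_1(\mathbb{T}^2;\Z_2)$ for a cellular embedding, plus nondegeneracy of the intersection form) is sound, and your remark that the four sign patterns in Proposition~\ref{prop:Z} do not form a linear character --- so the final arithmetic must be checked by hand --- is an accurate observation about where the asymmetry of $(1,1)$ really comes from. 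The paper's route is shorter because Theorem~\ref{thm:Z} packages the quadratic-form bookkeeping for you; yours is more self-contained at the cost of the equidistribution lemma, and has the small side benefit of exhibiting the explicit values $P^\varphi(G^*,\mathbf 1)^{1/2}=\mp 2^{|V(G)|}$.
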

\begin{proof}
By Corollary~\ref{cor:KW1}, we only need to determine the sign $A(\varphi)$ in the equation above. Setting $x=1$ (and therefore, $x^*=0$) leads to
\[
P^\varphi(G,1)^{1/2}=(-1)^{A(\varphi)}2^{(|V(G^*)|+|E(G)|-|V(G)|)/2}=(-1)^{A(\varphi)}2^{|V(G^*)|},
\]
using the fact that $|V(G)|-|E(G)|+|V(G^*)|$ is equal to the Euler characteristic of the torus, i.e. zero. Furthermore, the Ising partition function $Z(G,x)$ with weights $x=1$ is nothing but the cardinality of the
$\Z_2$-vector space of 1-cycles modulo 2 in $G$. Since $G$ is connected, the dimension of this space is classically equal to $|E(G)|-|V(G)|+1=|V(G^*)|+1$. Theorem~\ref{thm:Z} now reads
\[
2^{|V(G^*)|+1}=\frac{1}{2}\big(-(-1)^{A(1,1)}+(-1)^{A(1,-1)}+(-1)^{A(-1,1)}+(-1)^{A(-1,-1)}\big)2^{|V(G^*)|}.
\]
The term in parentheses is therefore equal to 4, a fact which determines the sign of the four terms. The corollary follows.\end{proof}


\section{The critical temperature}

Consider a planar non-degenerate locally-finite weighted graph $(\mathcal G,J)$ invariant under a lattice $\Lambda\simeq \Z\oplus\Z$. Alternatively, we will use $(\mathcal G,x)$ when working directly with the high-temperature expansion. Recall that $G=\mathcal G/\Lambda$. For integral positive $n,m$, let $\Lambda_{nm}\simeq n\Z\oplus m\Z$ and let 
$G_{nm}$ denote the toric weighted graph given by $\mathcal G/\Lambda_{nm}$. Note that $G_{11}=G$.

Let us introduce the {\em free energy per fundamental domain} as follows:
\begin{equation}\label{eq:def free energy}
\log Z_x:=\lim_{n\rightarrow \infty} \frac{1}{n^2} \log Z(G_{nn},x).
\end{equation}
This definition is justified by classical super-multiplicative properties of partition functions.
The strategy of the proof of Theorem~\ref{thm:temp} is the following. 
We express the free energy in terms of the Kac-Ward determinant, and we show that for any $x$ such that $P^{z,w}(G,x)$ has a zero on
$\mathbb T^2:=\{(z,w):|z|=1,|w|=1\}$, the free energy $\log Z_{x}$ is not twice differentiable at $x$. We then harness standard arguments on the
Ising model to show that $\log Z_{x}$ is twice differentiable except possibly at criticality. We begin by a classical lemma.

\begin{lemma}
\label{lem:enlarge}
For any $(z,w)\in\mathbb T^2$ and $x\in \mathbb R^{E(G)}$,
\[
P^{z,w}(G_{nm},x)=\prod_{u^n=z}\prod_{v^m=w}P^{u,v}(G,x).
\]
\end{lemma}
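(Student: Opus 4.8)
The plan is to prove this multiplicativity of the Kac-Ward determinant under passage to a larger fundamental domain by exploiting the $\Lambda$-periodicity of the graph to block-diagonalize the Kac-Ward matrix of $G_{nm}$ via a discrete Fourier transform. The key observation is that the oriented-edge set of $G_{nm}$ is naturally a disjoint union of $nm$ translated copies of $\EE(G)$, indexed by the cosets $\Lambda/\Lambda_{nm}\simeq(\Z/n)\times(\Z/m)$, and the transfer matrix $T^\varphi$ for $G_{nm}$ only couples edges according to the local combinatorial data (the turning angles $\alpha(e,e')$ and the cocycle weights), which are translation-invariant. Thus $I-T^\varphi$ for $G_{nm}$ is a block matrix over $\L(\EE(G))\otimes\C[\Lambda/\Lambda_{nm}]$ whose blocks depend on the relative translation, i.e.\ it is block-circulant with respect to the two lattice directions.

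First I would set up coordinates: fix the geometric basis of $H_1(\mathbb{T}^2;\Z)$ so that $\varphi$ on $G_{nm}$ corresponds to the pair $(z,w)$, and note that the covering $G_{nm}\to G$ realizes $\Lambda/\Lambda_{nm}$ as the deck group. I would write the transfer matrix acting on $\L(\EE(G_{nm}))\cong\L(\EE(G))\otimes\C^{nm}$ and observe that, because the edge-weights $x_e$, the angles $\alpha(e,e')$, and the local cocycle are all $\Lambda$-invariant, the matrix is equivariant under the $\Lambda/\Lambda_{nm}$-translation action. The second step is to diagonalize the regular representation of the abelian group $\Lambda/\Lambda_{nm}$ by the characters, which are exactly the pairs $(u,v)$ with $u^n=z$ and $v^m=w$: the extra phases picked up by an edge of $G_{nm}$ that ``wraps around'' relative to the smaller domain combine with the character $\varphi=(z,w)$ so that on the $(u,v)$-eigenspace the block is precisely the Kac-Ward matrix $I-T^{u,v}$ for the base graph $G$. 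Taking determinants across the block decomposition then yields
\[
P^{z,w}(G_{nm},x)=\prod_{u^n=z}\prod_{v^m=w}P^{u,v}(G,x),
\]
as claimed.

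The main obstacle, and the step deserving the most care, is verifying that the Fourier phases match up exactly so that each character block reproduces $I-T^{u,v}(G,x)$ with the correct $(u,v)$ rather than some shifted or twisted version. Concretely, an edge in $G_{nm}$ joining two copies of $G$ in adjacent cosets must carry, under the character $(u,v)$, the same cocycle value $\varphi(e)=u^{\gamma_x\cdot e}v^{\gamma_y\cdot e}$ that it would carry as an edge of $G$ with parameter $(u,v)$; the constraint $u^n=z$, $v^m=w$ is exactly what ensures the global monodromy around the two generators of the big torus equals $(z,w)$. I would check this by tracking how the intersection numbers with the curves $\gamma_x,\gamma_y$ transform under the covering and confirming, via the cohomological invariance of $P^\varphi$ noted after \eqref{eq:def p}, that the representative cocycle may be chosen compatibly on base and cover. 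Since the angle-data $\alpha(e,e')$ is purely local and unchanged by the covering, no geometric subtlety enters beyond this bookkeeping of characters.
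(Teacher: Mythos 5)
Your proposal is correct and is essentially the paper's argument: the paper proves this lemma by citing \cite[Theorem 3.3]{KOS}, whose proof is exactly the block-diagonalization of the translation-equivariant matrix on the cover via the characters of the deck group $\Lambda/\Lambda_{nm}$, with the constraint $u^n=z$, $v^m=w$ arising from matching the monodromy as you describe. You have simply written out the argument that the paper delegates to the reference.
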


\begin{proof}
The proof of~\cite[Theorem 3.3]{KOS} applies almost verbatim.
\end{proof}

The next lemma shows that the only zeros of $P^{z,w}(G,x)$ are localized at $(1,1)$.
\begin{lemma}
\label{lemma:pos}
For any $(z,w)\in \mathbb T^2 \setminus \{(1,1)\}$ and $x\in (0,1)^{E(G)}$, $P^{z,w}(G,x)> 0$.
\end{lemma}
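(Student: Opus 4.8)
The plan is to prove positivity by combining two ingredients already developed in the paper: the identification of the Kac-Ward determinant with a Kasteleyn determinant (Theorem~\ref{thm:corr}), and the Harnack property of the resulting spectral curve. First I would invoke Theorem~\ref{thm:corr} to write
\[
P^{z,w}(G,x)=2^{-|V(G)|}\prod_{e\in E(G)}(1+x_e^2)\,\det(K^{z,w}(C_G,y)).
\]
Since the prefactor is strictly positive for $x\in(0,1)^{E(G)}$, the sign of $P^{z,w}(G,x)$ is governed entirely by $\det(K^{z,w}(C_G,y))$. The zero set of this determinant is exactly the spectral curve $A$ of the associated bipartite dimer model, which by~\cite[Theorem 1]{K-O} is a (simple) Harnack curve because $(G,x)$ is non-degenerate and the Newton polygon has positive area.

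The crux is to show that the only zero of $P^{z,w}$ on the \emph{unit} torus $\mathbb{T}^2=\{|z|=|w|=1\}$ is at $(z,w)=(1,1)$. The defining property of a Harnack curve $A$ is that it meets each torus $\mathbb{T}^2(r,s)=\{|z|=r,\,|w|=s\}$ in at most two points, which for Harnack curves come as a complex-conjugate pair $(z_0,w_0),(\bar z_0,\bar w_0)$ on the same torus. Applied to the unit torus $r=s=1$, this means $A\cap\mathbb{T}^2$ consists of at most two points, and these are either a single real point (where $z_0=\bar z_0$, $w_0=\bar w_0$, so $z_0,w_0\in\{\pm1\}$) or a conjugate pair. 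So the task reduces to pinning down which points of $\{\pm1\}^2$ can lie on $A$ and ruling out any genuine conjugate pair. For this I would use Proposition~\ref{prop:Z}: the four values $P^{\pm1,\pm1}(G,x)$ are squares of the real quantities $Z_{00}\pm Z_{10}\pm Z_{01}\pm Z_{11}$, and since each partial partition function $Z_\alpha$ is a sum of strictly positive monomials $x(\gamma)>0$, the combinations $Z_{00}+Z_{10}+Z_{01}+Z_{11}$-type sums are manifestly positive; the only combination that can vanish is $Z_{00}-Z_{10}-Z_{01}-Z_{11}$, namely $P^{1,1}$. Hence $P^{-1,1},P^{1,-1},P^{-1,-1}>0$ strictly, so none of $(-1,1),(1,-1),(-1,-1)$ lies on $A$.

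It then remains to exclude a non-real conjugate pair on $\mathbb{T}^2$ and to confirm that the surviving real candidate is indeed $(1,1)$. Because $P^{-1,1},P^{1,-1},P^{-1,-1}>0$ while (as follows from the Harnack/amoeba picture and the monotonicity of the disordered-to-ordered transition) $P^{z,w}$ is positive on the bulk of the unit torus, a vanishing can only occur at the single point of $\mathbb{T}^2\cap A$ singled out by the at-most-two-points property; the conjugate-pair case is incompatible with all four corner values being real and three of them strictly positive, forcing the real point $(1,1)$. Concretely, $P^{1,1}(G,x)=(Z_{00}-Z_{10}-Z_{01}-Z_{11})^2\ge0$ always, with equality possible, and the Harnack constraint upgrades ``possible zero at $(1,1)$'' to ``the \emph{only} possible zero on $\mathbb{T}^2$.'' I expect the main obstacle to be this last step: carefully extracting from the Harnack property (in the sharp form of~\cite{M-R}, \cite{K-O}) that the unit torus meets $A$ in at most one point once one already knows three of the four half-period points give strictly positive values, and that this point must be $(1,1)$ rather than an interior conjugate pair. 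The positivity of all monomials $x(\gamma)$ and the square structure from Proposition~\ref{prop:Z} are what make the three corners strictly positive; the Harnack geometry is what converts a pointwise nonnegativity into the global strict positivity on $\mathbb{T}^2\setminus\{(1,1)\}$.
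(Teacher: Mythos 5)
There are two genuine gaps. First, your claim that $P^{-1,1},P^{1,-1},P^{-1,-1}>0$ is ``manifest'' from the positivity of the monomials $x(\gamma)$ is false: by Proposition~\ref{prop:Z} each of these three quantities is the square of a combination with exactly one minus sign, e.g.\ $P^{-1,-1}(G,x)=(Z_{00}+Z_{10}+Z_{01}-Z_{11})^2$, and nothing a priori prevents $Z_{11}$ from equalling $Z_{00}+Z_{10}+Z_{01}$. The paper's Step~1 closes this by a Kramers--Wannier duality argument (Corollary~\ref{cor:KW2} together with Theorem~\ref{thm:Z}): one writes $Z_\alpha(G,x)$ as $C$ times a signed sum of the dual $Z_\beta(G^*,x^*)$'s and deduces $Z_\alpha(G,x)\le Z_{00}(G,x)$ for every $\alpha$, which combined with strict positivity of all $Z_\alpha$ (non-degeneracy) gives the strict inequality. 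This inequality is the actual content of the corner positivity; it is not visible from the monomial expansion alone.

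Second, and more seriously, your route does not exclude a non-real conjugate pair of zeros on the unit torus. Since $P^{z,w}(G,x)$ is real on $\mathbb{T}^2$, a non-real zero $(z_0,w_0)$ automatically comes with $(\bar z_0,\bar w_0)=(z_0^{-1},w_0^{-1})$ on the same torus, which is exactly the ``at most two points'' that the Harnack property permits; knowing the four corner values tells you nothing about such a pair, and your appeal to ``incompatibility'' is not an argument. The paper's mechanism is different and essential: it first proves $P^{z,w}(G,x)\ge 0$ for \emph{all} $(z,w)\in\mathbb{T}^2$ by a connectedness argument in the weight space $(0,1)^{E(G)}$ --- using Lemma~\ref{lem:enlarge} to transfer the corner positivity from $G_{mn}$ down to all $(z,w)$ with $z^m=w^n=-1$, deducing that $\{x:P^{z,w}(G,x)\ge 0\}=\{x:P^{z,w}(G,x)>0\}$ is clopen and nonempty, hence everything, and then passing to general $(z,w)$ by density and continuity. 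Once $P\ge 0$ on $\mathbb{T}^2$, any zero must be a critical point of $P$, i.e.\ a singularity of the spectral curve, and it is Corollary~\ref{cor:sing} (Harnack curves have only \emph{real} singularities, from~\cite[Lemma~6]{M-R}) --- not the ``two points per torus'' property --- that forces the zero into $\{\pm 1\}^2$ and hence, by the corner computation, to $(1,1)$. Without the nonnegativity step and the real-singularity form of the Harnack input, your argument cannot be completed.
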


\begin{proof}
The proof is divided into four steps.
\smallbreak
\noindent
{\em \textsc{Step 1.} $P^{z,w}(G,x)>0$ for any $(z,w)\in\{(-1,1),(1,-1),(-1,-1)\}$.}
Proposition~\ref{prop:Z} shows that
\[
Z_{00}(G,x)= \frac{1}{4}\Big(P^{1,1}(G,x)^{1/2}+P^{1,-1}(G,x)^{1/2}+P^{-1,1}(G,x)^{1/2}+P^{-1,-1}(G,x)^{1/2}\Big).
\]
Corollary~\ref{cor:KW2} and Theorem~\ref{thm:Z} then give the equality 
$Z_{00}(G,x)=C\cdot Z(G^*,x^*)$, where 
\[
C:=2^{|V(G^*)|/2-|V(G)|/2-1}\prod_e\left(\frac{1+x_e}{1+x_e^*}\right)^{1/2}.
\]
In the same way, $Z_{10}(G,x)$ can be expressed as a linear combination of $P^{1,1}(G^*,x^*)^{1/2}$, $P^{1,-1}(G^*,x^*)^{1/2}$, $P^{-1,1}(G^*,x^*)^{1/2}$ and $P^{-1,-1}(G^*,x^*)^{1/2}$. Using Proposition~\ref{prop:Z} again, we obtain the equality
\[
Z_{10}(G,x)=C\cdot\left(Z_{00}(G^*,x^*)+Z_{10}(G^*,x^*)-Z_{01}(G^*,x^*)-Z_{11}(G^*,x^*)\right),
\]
which leads to $Z_{10}(G,x)\le C\cdot Z(G^*,x^*)=Z_{00}(G,x)$. This argument can be carried out for any homology class $\alpha$, so
\[
Z_\alpha(G,x)\le Z_{00}(G,x)\quad\text{for any $\alpha\in\{00,01,10,11\}$}.
\]
The assumption that $G$ is non-degenerate implies that all $Z_\alpha(G,x)$'s are strictly positive. The statement now follows from the inequality displayed above, and Proposition~\ref{prop:Z}.

\smallbreak
\noindent{\em \textsc{Step 2.}
$P^{z,w}(G,x)\ne 0$ for any $(z,w)$ such that $z^m=-1$ and $w^n=-1$ for some $(m,n)\in \mathbb N^2$.} This follows immediately from the first step applied to $G_{mn}$ and Lemma~\ref{lem:enlarge}.

\smallbreak
\noindent
\textsc{Step 3.} {\em $P^{z,w}(G,x)\ge 0$ for any $(z,w)\in\mathbb{T}^2$.}
Recall that $P^{z,w}(G,x)$ is real for any $(z,w)\in\mathbb{T}^2$. Let us fix $(z,w)$ with $z^n=-1$ and $w^m=-1$. By the second step,
\[
X=\{x\in(0,1)^{E(G)}\;|\;P^{z,w}(G,x)\ge 0\}=\{x\in(0,1)^{E(G)}\;|\;P^{z,w}(G,x)> 0\}.
\]
By continuity of $x\mapsto P^{z,w}(G,x)$, $X$ is therefore both closed and open. It is also non-empty since $P^{z,w}(G,x)$ tends to $1$ as $x$ tends to 0.
By connexity of $(0,1)^{E(G)}$, $X$ is this whole space. By continuity of $(z,w)\mapsto P^{z,w}(G,x)$, the statement follows.

\smallbreak
\noindent{\em \textsc{Step 4.}
$P^{z,w}(G,x)=0$ implies $(z,w)=(1,1)$.}
Assume that $P^{z,w}(G,x)=0$. Then, $(z,w)$ must be a singularity (i.e. satisfy $\frac{\partial}{\partial z}P^{z,w}(G,x)=\frac{\partial}{\partial w}P^{z,w}(G,x)=0$); otherwise,
$P^{z,w}(G,x)$ would take negative values near $(z,w)$, contradicting the third step. By Corollary~\ref{cor:sing}, $(z,w)\in\mathbb T^2$ is real, i.e.
$(z,w)\in \{(-1,-1),(-1,1),(1,-1),(1,1)\}$. By the first step, $(z,w)$ must be equal to $(1,1)$, and the lemma is proved.
\end{proof}

The free energy can be expressed in terms of $P$ as follows:
\begin{lemma}
\label{lemma:free}
For any $x\in (0,1)^{E(G)}$,
\[
\log Z_x= \frac{1}{2(2\pi i)^2} \int_{\mathbb T^2} \log P^{z,w}(G,x) \frac{dz}{z}\frac{dw}{w}.
\]
\end{lemma}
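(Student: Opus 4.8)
The plan is to combine Theorem~\ref{thm:Z}, which expresses the toric Ising partition function through the four Kac-Ward determinants, with Lemma~\ref{lem:enlarge}, which factors the determinants of $G_{nn}$ over roots of unity, and then to recognise $\frac{1}{n^2}\log Z(G_{nn},x)$ as a Riemann sum converging to the stated integral.

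First I would apply Theorem~\ref{thm:Z} to the toric graph $G_{nn}$, writing
\[
Z(G_{nn},x)=\tfrac12\Big(-P^{1,1}(G_{nn},x)^{1/2}+P^{1,-1}(G_{nn},x)^{1/2}+P^{-1,1}(G_{nn},x)^{1/2}+P^{-1,-1}(G_{nn},x)^{1/2}\Big),
\]
where each $P^{\epsilon}(G_{nn},x)^{1/2}$, $\epsilon\in\{\pm1\}^2$, is the polynomial square root with constant term $+1$. By Proposition~\ref{prop:Z} each such root equals a signed sum $\pm Z_{00}\pm Z_{10}\pm Z_{01}\pm Z_{11}$ of the nonnegative partial partition functions $Z_\alpha=Z_\alpha(G_{nn},x)$, so the triangle inequality gives $\big|P^{\epsilon}(G_{nn},x)^{1/2}\big|\le Z(G_{nn},x)$, while $Z(G_{nn},x)\le\tfrac12\sum_{\epsilon}\big|P^{\epsilon}(G_{nn},x)^{1/2}\big|$. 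Setting $M_n:=\max_{\epsilon}\sqrt{P^{\epsilon}(G_{nn},x)}$, these bounds yield the squeeze $M_n\le Z(G_{nn},x)\le 2M_n$. Hence $\frac{1}{n^2}\log Z(G_{nn},x)$ and $\frac{1}{n^2}\log M_n$ share the same limit, which disposes of the signs of the square roots.

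Next I would compute the growth of $M_n$. Since $\frac{1}{n^2}\log M_n=\tfrac12\max_{\epsilon}\frac{1}{n^2}\log P^{\epsilon}(G_{nn},x)$, I would use Lemma~\ref{lem:enlarge} to write, for each character $\epsilon=(\epsilon_1,\epsilon_2)$,
\[
\frac{1}{n^2}\log P^{\epsilon}(G_{nn},x)=\frac{1}{n^2}\sum_{u^n=\epsilon_1}\sum_{v^n=\epsilon_2}\log P^{u,v}(G,x),
\]
a Riemann sum over an equidistributing shifted grid of $n^2$ points on $\mathbb T^2$. By Lemma~\ref{lemma:pos} the integrand $\log P^{z,w}(G,x)$ is continuous on $\mathbb T^2\setminus\{(1,1)\}$, its only possible singularity being at $(1,1)$; since by Corollary~\ref{cor:sing} that point is at worst a real node, the singularity is of logarithmic type, hence integrable, and the limiting integral $I:=\frac{1}{(2\pi i)^2}\int_{\mathbb T^2}\log P^{z,w}(G,x)\frac{dz}{z}\frac{dw}{w}$ is finite. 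For every $\epsilon\ne(1,1)$ the corresponding Riemann sum converges to $I$, exactly as in the dimer free-energy computation of~\cite{KOS}, while for $\epsilon=(1,1)$ the sum is either such a convergent sum or, at a critical weight, $-\infty$. In all cases $\max_{\epsilon}\frac{1}{n^2}\log P^{\epsilon}(G_{nn},x)\to I$, so $\frac{1}{n^2}\log M_n\to I/2$, and the squeeze gives $\log Z_x=\lim_n\frac{1}{n^2}\log Z(G_{nn},x)=I/2$, which is the asserted identity.

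The main obstacle is the convergence of these Riemann sums at a critical weight $x$, where $P^{1,1}(G,x)=0$ and $\log P^{z,w}(G,x)$ blows up at $(1,1)$. The grids sample values arbitrarily close to $(1,1)$, so plain continuity does not suffice and one must control the few near-singular terms. This is exactly where the Harnack structure of Corollary~\ref{cor:sing} is used: as the singularity is a real node, $\log P$ grows only logarithmically near $(1,1)$, the nearest grid point lies at distance $\gtrsim 1/n$, and its contribution is $O(n^{-2}\log n)=o(1)$; summing these estimates gives convergence to the integrable limit, following the standard argument of~\cite{KOS}.
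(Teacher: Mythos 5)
Your proof is correct and follows essentially the same route as the paper: a squeeze of $Z(G_{nn},x)$ between Kac--Ward determinants via Theorem~\ref{thm:Z} and Proposition~\ref{prop:Z}, followed by Riemann-sum convergence of $\frac{1}{n^2}\log P^{\epsilon}(G_{nn},x)$ using Lemma~\ref{lem:enlarge}, Lemma~\ref{lemma:pos}, and the logarithmic (hence integrable) behaviour of $\log P^{z,w}(G,x)$ near $(1,1)$. The only cosmetic difference is that the paper's sandwich uses only the three sectors $(-1,-1)$, $(-1,1)$, $(1,-1)$, whose root-of-unity grids stay at distance of order $1/n$ from the singular point, whereas you keep all four sectors and then argue that the $(1,1)$ term, which can be $-\infty$ at a critical weight, drops out of the maximum.
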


\begin{proof}
The proof is inspired by the proof of~\cite[Theorem 3.5]{KOS}. First note that $P^{z,w}(G,x)>0$ for any $(z,w)\ne (1,1)$ by Lemma~\ref{lemma:pos}, and that 
\begin{equation*}\label{eq:technical}
\log P^{z,w}(G,x)=O\big[\log (|z-1|+|w-1|)\big],
\end{equation*}
which legitimates the integral on the right-hand side. Lemma~\ref{lem:enlarge} and the bounded convergence theorem imply that for $(\ep,\eta)\in\{(-1,-1),(-1,1),(1,-1)\},$
\begin{align*}\frac1{n^2}\log P^{\ep,\eta}(G_{nn},x)&=\frac1{n^2}\sum_{z^n=\ep}\sum_{w^n=\eta}\log P^{z,w}(G,x)\longrightarrow \frac1{(2\pi i)^2}\int_{\mathbb T^2}\log P^{z,w}(G,x)\frac{dz}{z}\frac{dw}w.\end{align*} 
Now, Proposition~\ref{prop:Z} and Theorem~\ref{thm:Z} imply the inequalities
\[P^{-1,1}(G_{nn},x)\le Z(G_{nn},x)^2\le 9/4\max \{P^{-1,-1}(G_{nn},x),P^{-1,1}(G_{nn},x),P^{1,-1}(G_{nn},x)\},\]
which lead to the claim.
\end{proof}

For the next theorem, let us adopt the terminology of the high-temperature expansion. For $(\mathcal G,x)$ biperiodic, let $\mu_{\mathcal G,x}^+$ be the Ising measure on $\mathcal G$ with edge-weights $x$ and $+$ boundary conditions. Let $x^*$ such that $x+x^*+xx^*=1$ be the dual weights obtained by Kramers-Wannier duality.

\begin{theorem}
\label{normal}
 Let $\mathcal G$ be a non-degenerate locally-finite doubly periodic graph and $r\in V(\mathcal G)$. Then,\begin{enumerate}
 \item[(i)] If $\mu_{\mathcal G,y}^+(\sigma_r)=0$ for any weights $y$ in a neighborhood of $x$, then there exists $c=c(x)>0$ such that $\mu_{\mathcal G,x}^+(\sigma_a\sigma_b)\le \exp(-c|a-b|)$, for any $a,b\in V(\mathcal G)$;
 \item[(ii)] If $\mu_{\mathcal G,y}^+(\sigma_r)>0$ for any weights $y$ in a neighborhood of $x$, there exists $c'=c'(x)>0$ such that $\mu^+_{\mathcal G^*,x^*}(\sigma_u\sigma_v)\le \exp(-c'|u-v|)$, for any $u,v\in V(\mathcal G^*)$.
\end{enumerate}
\end{theorem}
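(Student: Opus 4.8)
The plan is to establish (i) directly from the sharpness of the phase transition, and to deduce (ii) from (i) by Kramers--Wannier duality.

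For part (i), the role of the hypothesis---that $\mu^+_{\mathcal G,y}(\sigma_r)=0$ for all $y$ in a neighborhood of $x$---is to place $x$ \emph{strictly} inside the disordered region, away from the critical surface. I would first observe that in this region the infinite-volume measure is the unique, $\pm$-symmetric Gibbs state, so that $\mu^+_{\mathcal G,x}(\sigma_a)=0$ for every $a$ and the two-point function $\mu^+_{\mathcal G,x}(\sigma_a\sigma_b)$ agrees with the truncated correlation. By monotonicity of the magnetization in the coupling constants (GKS/FKG), the neighborhood assumption means that one may raise all weights slightly while keeping the magnetization equal to zero, i.e. $x$ lies strictly below the critical surface. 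This is exactly the regime covered by the sharpness results of Aizenman--Barsky--Fern\'andez~\cite{ABF87}: their random-current and differential-inequality arguments use no special feature of $\mathbb Z^d$ and apply to an arbitrary ferromagnetic graph, while the $\Lambda$-periodicity of $\mathcal G$ supplies the subadditivity needed to produce a strictly positive inverse correlation length. Invoking~\cite{ABF87} then yields the bound $\mu^+_{\mathcal G,x}(\sigma_a\sigma_b)\le\exp(-c|a-b|)$ with $c=c(x)>0$.

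For part (ii), the hypothesis places $x$ strictly inside the ordered region of the primal model, and the idea is to transfer the problem to the dual weighted graph $(\mathcal G^*,x^*)$, where $x+x^*+xx^*=1$, and there to apply part (i). The map $y\mapsto y^*$ is a strictly decreasing homeomorphism of $(0,1)$, so it carries a neighborhood of $x$ onto a neighborhood of $x^*$; it remains to see that strict ordering of the primal corresponds to strict disordering of the dual. Here I would use the classical order/disorder duality of the planar Ising model: the primal exhibits spontaneous magnetization precisely when the dual is in its high-temperature phase. Concretely, since the magnetization is monotone and the duality map is decreasing, the assumption that $\mu^+_{\mathcal G,y}(\sigma_r)>0$ throughout a neighborhood of $x$ (so that $x$ lies strictly above the primal critical surface) forces $x^*$ to lie strictly below the dual critical surface, whence $\mu^+_{\mathcal G^*,y^*}(\sigma_{r'})=0$ for all $y^*$ near $x^*$. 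Applying part (i) to the non-degenerate doubly periodic graph $\mathcal G^*$ at the weights $x^*$ then gives $\mu^+_{\mathcal G^*,x^*}(\sigma_u\sigma_v)\le\exp(-c'|u-v|)$.

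I expect the main obstacle to be the phase correspondence in part (ii): one must verify cleanly that \emph{strict} ordering of $(\mathcal G,x)$ is equivalent to \emph{strict} disordering of $(\mathcal G^*,x^*)$, i.e. that the two critical surfaces match under $y\mapsto y^*$. This can be obtained from the classical Kramers--Wannier identification of the dual critical points, or, with care to avoid circularity, from the duality of the determinants $P^\varphi$ (Corollary~\ref{cor:KW1}) and the free-energy formula of Lemma~\ref{lemma:free} together with the monotonicity of the magnetization. Part (i) is, by contrast, essentially a citation of~\cite{ABF87}, the only delicate point being to confirm that the sharpness arguments, usually phrased on $\mathbb Z^d$, carry over verbatim to a general doubly periodic ferromagnetic graph.
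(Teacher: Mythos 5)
Your part (i) matches the paper's proof: both reduce to the observation that the neighborhood hypothesis places $x$ strictly in the disordered regime, and then cite~\cite{ABF87}, noting that the argument there extends from periodic models on $\mathbb Z^d$ to general doubly periodic graphs. That step is fine.

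The gap is in part (ii), at exactly the point you flag as ``the main obstacle'' but then do not resolve. The whole content of (ii) is the implication: $\mu^+_{\mathcal G,y}(\sigma_r)>0$ for $y$ near $x$ forces $\mu^+_{\mathcal G^*,y^*}(\sigma_{r'})=0$ for the corresponding dual weights, i.e.\ that the primal and dual models cannot \emph{both} be ordered. You assert this as ``the classical order/disorder duality of the planar Ising model,'' but for an arbitrary non-degenerate doubly periodic weighted graph this is not classical --- it is precisely the kind of statement the paper is in the business of proving, and the two alternative routes you sketch do not work. The ``classical Kramers--Wannier identification of the dual critical points'' is only available for $\mathbb Z^2$ with homogeneous couplings; in general the matching of the critical surfaces under $y\mapsto y^*$ is a consequence of Theorem~\ref{thm:temp}, not an input to it. Likewise, deducing it from Corollary~\ref{cor:KW1} and Lemma~\ref{lemma:free} requires already knowing that the singularity of the free energy coincides with the phase transition point, which is established \emph{using} Theorem~\ref{normal}; so that route is circular, as you half-suspect. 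Monotonicity of the magnetization alone cannot rule out an interval of weights on which both primal and dual magnetizations are positive.

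The paper closes this gap with a genuinely different (and non-circular) argument: it passes to the random-cluster representation via the Edwards--Sokal coupling, so that $\mu^+_{\mathcal G,y}(\sigma_r)>0$ becomes almost-sure existence of an infinite cluster for the wired measure $\phi^1_{\mathcal G,p,2}$; it then invokes the non-coexistence theorem for infinite primal and dual clusters in planar FKG measures (Corollary 9.4.6 of~\cite{She05}, a Zhang-type argument) to conclude that the dual random-cluster measure $(\phi^1_{\mathcal G,p,2})^*=\phi^{\rm free}_{\mathcal G^*,p^*,2}$ has no infinite cluster, whence $\mu^{\rm free}_{\mathcal G^*,y^*}(\sigma_u\sigma_v)\to 0$ and the dual magnetization vanishes. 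Only then is part (i) applied to $(\mathcal G^*,x^*)$. To repair your proof you need to supply this non-coexistence step (or an alternative such as the Lebowitz--Pfister argument~\cite{LP81} combined with duality, as the paper remarks); as written, the crucial implication is assumed rather than proved.
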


While this theorem is not surprising and follows from very classical ingredients, the proof does not appear in the literature. We therefore recall it here. 

\begin{proof}[Proof of Theorem~\ref{normal}]
Let us prove (i). Choose $\beta$ and $J$ in such a way that $x_e=\tanh(\beta J_e)$. The condition implies that $\beta<\beta_c$ for $(\mathcal G,J)$. We conclude by harnessing~\cite[Theorem 1]{ABF87}. Note that this theorem applies in the very general context of finite range Ising models on $\mathbb Z^d$ with periodic coupling constants. In our case, the model is only biperiodic but as discussed by the authors, the proof extends very easily to this framework.

Let us now deal with (ii). We aim to apply (i) to the dual measures. For this reason, it is sufficient to prove that $\mu_{\mathcal G,y}^+(\sigma_r)>0$ implies $\mu_{\mathcal G^*,y^*}^+(\sigma_u)=0$
or equivalently that $\mu_{\mathcal G^*,y^*}^{\rm free}(\sigma_u\sigma_v)\rightarrow 0$ as $|u-v|\rightarrow \infty$ ($u,v\in \mathcal G^*$), where ``free" refers to free boundary conditions. 
Intuitively, this claim is valid since there cannot be a positive spontaneous magnetization for both the primal and dual Ising models. This is best seen in the context of random-cluster models. We thus harness the Edwards-Sokal coupling~\cite{ES}. 

Let $\phi_{\mathcal G,p,2}^1$ be the random-cluster measure on $\mathcal G$ with cluster-weight 2, edge-weights given by
\[
p_e=\frac{2y_e}{1+y_e}
\]
and wired boundary conditions; see~\cite[Section 4.2]{Gri06}. 
The Edwards-Sokal coupling~\cite[Section 1.4]{Gri06} shows that
$$\phi^1_{\mathcal G,p,2}(r\text{ is connected to infinity})=\mu^+_{\mathcal G,y}(\sigma_r)>0$$ 
which implies the existence of an infinite cluster $\phi_{\mathcal G,p,2}^1$-almost surely. Let $(\phi_{\mathcal G,p,2}^1)^*$ be the dual measure, see~\cite[Section 6.1]{Gri06}. 
Since the random-cluster model satisfies the FKG inequality~\cite[Theorem 3.8]{Gri06}, Corollary 9.4.6 of~\cite{She05} implies that there cannot be coexistence of an infinite cluster and an infinite dual cluster $\phi_{\mathcal G,p,2}^1$-almost surely.
Thus, there is no infinite cluster on $\mathcal G^*$ for the dual random-cluster model $(\phi_{\mathcal G,p,2}^1)^*$-almost surely. Now, $(\phi_{\mathcal G,p,2}^1)^*=\phi^{\rm free}_{\mathcal G^*,p^*,2}$ and $\mu^{\rm free}_{\mathcal G^*,y^*}$ are also coupled via the Edwards-Sokal coupling. We obtain
\[
\mu_{\mathcal G^*,y^*}^{\rm free}(\sigma_u\sigma_v)=(\phi^1_{\mathcal G,p,2})^*(u\text{ connected to }v)\longrightarrow 0
\]
as $|u-v|\rightarrow \infty$ ($u,v\in \mathcal G^*$).
\end{proof}

\begin{remark}
For (ii), one can also invoke (with some modifications) the result of Lebowitz and Pfister~\cite{LP81} together with duality.
\end{remark}

\begin{proof}[Proof of Theorem~\ref{thm:temp}]
Fix $(\mathcal G,J)$. Define $x_\beta=(\tanh(\beta J_e))_e$ where $\beta>0$. First, Proposition~\ref{prop:Z} implies that Equation \eqref{eq:temp}
is equivalent to $P^{1,1}(G,x_\beta)^{1/2}=0$. Note that $P^{1,1}(G,x_\beta)^{1/2}$ tends to $1$ as $\beta$ tends to 0 (by definition), and to $-2^{|V(G^*)|}$
as $\beta$ tends to $\infty$ (by Corollary~\ref{cor:KW2}), so that there exists at least one solution (in $\beta$) to this equation.
We now show that there exists a unique such solution by proving that $P^{1,1}(G,x_{\beta})=0$ implies $\beta=\beta_c$.

We first show that $\beta_c\le \beta$ by assuming that $\beta<\beta_c$, or equivalently $x_{\beta}<x_{\beta_c}$, and by seeking for a contradiction.
Since $P^{1,1}(G,x)$ is not constant on a neighborhood of $x_\beta$ (for instance $P^{1,1}(G,x_{\beta'})>0$ for $\beta'$ close enough to $\beta$),
there exist $e\in E(G)$ and $x_{\beta}\le x< x_{\beta_c}$ such that $P^{1,1}(G,x)=0$ and $x_e\mapsto P^{1,1}(G,x)$ is non constant.
Fix such an edge $e$ and weights $x$, and let $x(t)$ be defined by $x(t)_{e'}=x_{e'}$ if $e'\ne e$ and $x(t)_e=x_e+t$.
By Equation~\eqref{eq:def p}, $t\mapsto P^{1,1}(G,x(t))$ is a polynomial of degree exactly 2.
Furthermore, since $P^{1,1}(G,x(0))=0$, $P^{e^{i\theta},e^{i\eta}}(G,x(t))=P^{e^{-i\theta},e^{-i\eta}}(G,x(t))$ and $P^{e^{i\theta},e^{i\eta}}(G,x)\ge 0$ for any $(\theta,\eta,t)$ in a neighborhood of the origin (Lemma~\ref{lemma:pos}), we obtain the following development near $(0,0,0)$:
$$P^{e^{i\theta},e^{i\eta}}(G,x(t))=(a_{11}t^2+a_{22}\theta^2+a_{33}\eta^2+a_{23}\theta\eta)f(t,\theta,\eta),$$
where $f(t,\theta,\eta)=1+o(|t,\theta,\eta|^2)$ is a non-vanishing analytic function, and the coefficients satisfy $a_{22}a_{33}-\frac{1}{4}a_{23}^2\ge 0$
and $a_{11}>0$.
Now,
$$t\longmapsto\int_{-\pi}^{\pi}\int_{-\pi}^{\pi}\log f(t,\theta,\eta){\rm d}\theta {\rm d}\eta$$
is twice differentiable in $t$, so that $\log Z_x$ is twice-differentiable in $x_e$ at $x$ if and only if 
$$t\longmapsto\int_{-\pi}^{\pi}\int_{-\pi}^{\pi} \log (a_{11}t^2+a_{22}\theta^2+a_{33}\eta^2+a_{23}\theta\eta) {\rm d}\theta {\rm d}\eta$$
is twice differentiable at 0. For $t\ne 0$, the second derivative in $t$ of this function equals
$$\int_{-\pi}^{\pi}\int_{-\pi}^{\pi}\frac{2a_{11}(a_{11}t^2+a_{22}\theta^2+a_{33}\eta^2+a_{23}\theta\eta)-4a_{11}^2t^2}{(a_{11}t^2+a_{22}\theta^2+a_{33}\eta^2+a_{23}\theta\eta)^2}{\rm d}\theta {\rm d}\eta.$$
As $t$ tends to $0$, this integral tends to
$$\int_{-\pi}^{\pi}\int_{-\pi}^{\pi}\frac{2a_{11}}{(a_{22}\theta^2+a_{33}\eta^2+a_{23}\theta\eta)}{\rm d}\theta {\rm d}\eta,$$
that is, to $\infty$, since $a_{11}>0$ and $a_{22}a_{33}-\frac{1}{4}a_{23}^2\ge 0$.
But this is in contradiction with the assumption that $x<x_{\beta_c}$ since in this case, exponential decay implies that $\log Z_x$ is twice differentiable. We now justify this last statement. Fix a representative $\{a,b\}\in E(\mathcal G)$ of the edge $e$. Let $\mathcal G_{nn}$ be a fundamental domain of the action of $\Lambda_{nn}$ on $\mathcal G$. Further assume that $\mathcal G_{nn}$ contains the edge $e$. 
Let $E_n$ (resp. $E$) be the set of translates of $e$ in $E(\mathcal G_{nn})$ (resp. $E(\mathcal G)$). Since the definition of the free energy does not depend on the boundary condition, one has
\[
\log Z_x=\lim_{n\rightarrow \infty}\frac{1}{n^2}\log Z(\mathcal G_{nn},x),
\]
where $Z(\mathcal G_{nn},x)$ is the partition function on $\mathcal G_{nn}$ with free boundary conditions. Set $x_e=\tanh(\beta J'_e)$. The high temperature expansion~\eqref{high-temp} shows that
\[
\log Z_x=\lim_{n\rightarrow \infty}\frac{1}{n^2}\log Z^{J'}_\beta(\mathcal G_{nn})-\sum_{e\in E(G)}\log(\cosh(\beta J'_e))-|V(G)|\cdot\log(2),
\]
where $Z^{J'}_\beta(\mathcal G_{nn})$ is defined in the introduction. Since $J'_e$ depends smoothly on $x_e$, it is sufficient to show that $\lim_{n\rightarrow \infty}\frac{1}{n^2}\log Z^{J'}_\beta(\mathcal G_{nn})$ is twice differentiable with respect to $J'_e$. We obtain
\[
\frac{1}{n^2}\frac{\partial^2}{\partial J'_e\,^2}\log Z^{J'}_\beta(\mathcal G_{nn})
=\beta^2\sum_{\{u,v\}\in E_n}\Big(\mu_{\mathcal G_{nn},x}(\sigma_a\sigma_b\sigma_u\sigma_v)-\mu_{\mathcal G_{nn},x}(\sigma_a\sigma_b)\mu_{\mathcal G_{nn},x}(\sigma_u\sigma_v)\Big),
\]
where $\mu_{\mathcal G_{nn},x}$ is the measure on $\mathcal G_{nn}$ with free boundary conditions. Lebowitz's inequality~\cite[Remark (i), p. 91]{Leb74} then yields
\begin{align*}
|\mu_{\mathcal G_{nn},x}(\sigma_a\sigma_b\sigma_u\sigma_v)-&\mu_{\mathcal G_{nn},x}(\sigma_a\sigma_b)\mu_{\mathcal G_{nn},x}(\sigma_u\sigma_v)|\\
&\le\mu_{\mathcal G_{nn},x}(\sigma_a\sigma_u)\mu_{\mathcal G_{nn},x}(\sigma_b\sigma_v) +\mu_{\mathcal G_{nn},x}(\sigma_a\sigma_v)\mu_{\mathcal G_{nn},x}(\sigma_b\sigma_u).
\end{align*}
The comparison between boundary conditions implies $\mu_{\mathcal G_{nn},x}(\sigma_a\sigma_u)\le \mu_{\mathcal G,x}^+(\sigma_a\sigma_u)$, which, together with Lebowitz's inequality and Property (i) of Theorem~\ref{normal}, leads to
\[
\sum_{\{u,v\}\in E_n}\Big|\mu_{\mathcal G_{nn},x}(\sigma_a\sigma_b\sigma_u\sigma_v)-\mu_{\mathcal G_{nn},x}(\sigma_a\sigma_b)\mu_{\mathcal G_{nn},x}(\sigma_u\sigma_v)\Big|\le
2\sum_{\{u,v\}\in E}\exp(-c(x)|u-a|).
\]
The term on the right-hand side is therefore bounded uniformly in $n$ and in $x$, provided that $x$ takes value in a compact subset of $\{x<x_{\beta_c}\}$. The  bounded convergence theorem then implies that $\log Z_x$ is twice differentiable in $J'_e$, and thus in $x_e$. In conclusion, $x$ cannot be smaller than $x_{\beta_c}$, and we obtain that $\beta\ge \beta_c$.
 \medbreak
Let us conclude the proof by showing that $\beta\le \beta_c$.  Corollary~\ref{cor:KW1} shows that $P^{1,1}(G,x_{\beta})=0$ implies $P^{1,1}(G^*,x_{\beta}^*)=0$. Now if $\beta>\beta_c$, or equivalently $x_{\beta}^*<x_{\beta_c}^*$, one can run the previous argument for the dual model, using Property (ii) of Theorem~\ref{normal} in place of Property (i).
 \end{proof}

\begin{proof}[Proof of Corollary~\ref{cor}]
When $\beta\ne \beta_c$, $P^{z,w}(G,x_\beta)>0$ for any $(z,w)\in \mathbb T^2$ and $\beta\mapsto \log P^{z,w}(G,x_\beta)$ is analytic. By Lemma~\ref{lemma:free}, the free energy is a parameter-dependent integral which is analytic at $\beta\ne \beta_c$.
\end{proof}
\bibliographystyle{plain}

\bibliography{Ising}

\end{document}